\documentclass[a4paper,onecolumn,accepted=2025-11-02]{quantumarticle}
\pdfoutput=1

\usepackage[utf8]{inputenc}
\usepackage[english]{babel}
\usepackage[numbers,sort&compress]{natbib}

\usepackage{graphicx}
\usepackage{dcolumn}
\usepackage{bbm}
\usepackage{bm}
\usepackage{qcircuit}
\usepackage{braket}
\usepackage{subfigure}
\usepackage{mathtools}
\usepackage{xparse}
\usepackage{amssymb,amsmath,amsfonts}
\usepackage{dsfont}
\usepackage[dvipsnames]{xcolor}
\usepackage{hyperref}
\hypersetup{
  colorlinks=true,
  citecolor=blue
}
\usepackage[all]{hypcap}  
\usepackage{soul}
\usepackage{amsthm}
\usepackage{tikz}
\usepackage[normalem]{ulem}


\newcommand{\tr}{{\mathrm{Tr}}}

\newcommand{\e}{\mathrm{e}}
\renewcommand\d{\,\mathrm{d}}

\newcommand{\eps}{\epsilon}
\newcommand{\lad}{\lambda}
\def \si {\sigma}

\newcommand{\mc}[1]{\mathcal{#1}}

\def \l {\langle} 
\def \r {\rangle}
\def \q {\quad}


\newcommand{\EE}{\operatorname{\mathbb{E}}}

\newcommand{\cA}{\mathcal{A}}
\newcommand{\cB}{\mathcal{B}}
\newcommand{\cD}{\mathcal{D}}
\newcommand{\cE}{\mathcal{E}}
\newcommand{\cF}{\mathcal{F}}

\newcommand{\cH}{\mathcal{H}}

\newcommand{\cL}{\mathcal{L}}

\newcommand{\cP}{\mathcal{P}}

\newcommand {\Or} {\mathcal{O}}

\definecolor{lw}{RGB}{0,0,255}







\newtheorem{theorem}{Theorem}
\newtheorem{lemma}[theorem]{Lemma}
\newtheorem{corollary}[theorem]{Corollary}
\newtheorem{proposition}[theorem]{Proposition}

\theoremstyle{definition}
\newtheorem{definition}[theorem]{Definition}
\theoremstyle{remark}
\newtheorem{remark}{Remark}
\theoremstyle{definition}
\newtheorem{assumption}{Assumption}
\theoremstyle{definition}

\makeatletter

\thispagestyle{empty}
\begin{document}

\title{A Randomized Method for Simulating Lindblad Equations and Thermal State Preparation}

\author{Hongrui Chen}
\email{hongrui@stanford.edu}
\affiliation{Department of Mathematics, Stanford University, Stanford, CA 94305}
\author{Bowen Li}%
 \email{bowen.li@cityu.edu.hk}
\affiliation{Department of Mathematics, City University of Hong Kong, Kowloon Tong, Hong Kong
SAR}%



\author{Jianfeng Lu}%
 \email{jianfeng@math.duke.edu}
\affiliation{Departments of Mathematics, Physics, and Chemistry, Duke University, Durham, NC 27708}

\author{Lexing Ying} \email{lexing@stanford.edu} \affiliation{Department of Mathematics and ICME, Stanford University, Stanford, CA 94305}%

\begin{abstract}
We study a qDRIFT-type randomized method to simulate Lindblad dynamics by decomposing its generator into an ensemble of Lindbladians, $\mathcal{L} = \sum_{a \in \mathcal{A}} \mathcal{L}_a$, where each $\mathcal{L}_a$ comprises a simple Hamiltonian and a single jump operator. Assuming an efficient quantum simulation is available for the Lindblad evolution $e^{t\mathcal{L}_a}$, we implement $e^{t\mathcal{L}_a}$ for a randomly sampled $\mathcal{L}_a$ at each time step according to a probability distribution $\mu$ over the ensemble $\{\mathcal{L}_a\}_{a \in \mathcal{A}}$. This randomized strategy reduces the quantum cost of simulating Lindblad dynamics, particularly in quantum many-body systems with a large or even infinite number of jump operators. 

Our contributions are two-fold.  First, we provide a detailed convergence analysis of the proposed randomized method, covering both average and typical algorithmic realizations. This analysis extends the known results for the random product formula from closed systems to open systems, ensuring rigorous performance guarantees. Second, based on the random product approximation, we derive a new quantum Gibbs sampler algorithm that utilizes jump operators sampled from a Clifford-random circuit. This generator (i) can be efficiently implemented using our randomized algorithm, and (ii) exhibits a spectral gap lower bound that depends on the spectrum of the Hamiltonian.  Our results present a new instance of a class of Hamiltonians for which the thermal states can be efficiently prepared using a quantum Gibbs sampling algorithm.
\end{abstract}
\section{Introduction}
The open quantum system is a fundamental research area in quantum physics, focusing on the behavior of quantum systems interacting with external environments \cite{breuer2007theory}. In contrast to closed quantum systems, which are isolated and unaffected by their surroundings, open quantum systems experience complex dynamics due to external influences, resulting in phenomena such as decoherence and dissipation. Central to this topic is the Lindblad master equation \cite{cmp/1103899849,cmp/1103860160}, which is the Markovian approximation of a general open quantum dynamics that models the system evolution through dissipative quantum jumps. The Lindblad dynamics has nowadays found fruitful applications in various fields, including materials science \cite{Harbola_2006}, quantum optics \cite{doi:10.1080/00107514.2021.1924281}, and quantum computation \cite{PhysRevLett.75.3788}. In particular, recent developments employed Lindblad dynamics as an algorithmic tool for quantum Gibbs sampling \cite{chen2023quantum,ding2023singleancilla,chen2023efficient,ding2024efficient,gilyén2024quantum}, aiming at preparing the thermal or ground state on a quantum computer. 

Given the wide applications of Lindblad dynamics, developing efficient simulation methods for Lindblad equations is of great importance. Over the past decade, both classical and quantum algorithms have been extensively explored for this purpose. Classical methods \cite{cao2021structure} often face challenges due to exponential scaling in complexity as the system size increases. In contrast, quantum algorithms show promise in reducing this cost from exponential to polynomial scaling. However, many existing quantum methods \cite{childs2016efficient, cleve2016efficient, li2023simulating} face challenges such as requiring a significant number of ancilla qubits and complex control circuits, especially when dealing with Lindblad equations that involve a large number of jump operators.

In this paper, we propose a randomized quantum algorithm for simulating the Lindblad equation. Instead of encoding all jump operators into a quantum circuit, our method implements a single randomly sampled jump operator at each time step. This approach allows for a simpler Lindbladian simulation with the complexity independent of the number of jump operators. 
This simplification is particularly advantageous when dealing with a large or infinite number of jump operators. Our method draws inspiration from the qDRIFT approach for Hamiltonian simulation \cite{Campbell_2019}, which we generalize here from closed to open quantum systems. 

Our approach is particularly useful for quantum Gibbs sampling, which is analogous to classical Markov chain Monte Carlo methods. The Lindblad dynamics-based quantum Gibbs sampler aims to construct an ergodic quantum Markov dynamics that converges to the thermal state. Existing approaches \cite{chen2023quantum,chen2023efficient,ding2024efficient} typically rely on deterministic methods, requiring a large number of jump operators to ensure ergodicity. In contrast, our method utilizes randomly sampled jump operators at each step, leveraging randomness to simplify the implementation. 

As a concrete example, we introduce the \emph{Clifford-random Davies generator}, which consists of an ensemble of exponentially many global jump operators, making deterministic methods impractical. Our randomized approach makes the computation tractable by implementing a single operator sampled from the ensemble at each time step. We further provide a mixing-time analysis for the Clifford-random Davies generator, showing that the spectral gap of the dynamics is closely related to the spectral distribution of the underlying Hamiltonian system. In particular, for some Hamiltonians whose spectra can be predicted by random matrix theory, our analysis demonstrates that the Clifford-random Davies generator achieves efficient thermalization.

\medskip 

\noindent \textbf{General Framework.}
Let $\{\mc{L}_a\}_{a \in \mc{A}}$ be an ensemble of uniformly bounded Lindbladian operators with $\mc{A}$ being a finite or infinite index set, and $\mu$ be a probability measure on $\cA$. We consider the Lindblad master equation with the generator, called the Lindbladian, in the integral form: 
\begin{align} \label{dynamic}
\frac{\d \rho}{\d t} = \bar{\cL} (\rho) := \int_{\cA} \cL_a (\rho) \d \mu(a)\,,
\end{align}
where each Lindbladian $\cL_a$ is given by
\begin{equation} \label{eq:singlelind}
\cL_a (\rho) := \underbrace{-i[H_a,\rho ]}_{\rm coherent} + \underbrace{V_a\rho V_a^{\dag} - \frac{1}{2}\{V_a^\dag V_a,\rho \}}_{\rm dissipative}\,.    
\end{equation}
Here, the coherent and dissipative terms are driven by the Hamiltonian $H_a$ and the jump operator $V_a$, respectively. 

Let $\tau$ be a time step size and $\cL_1,\cL_2,\cdots, \cL_M$ be a sequence of random Lindbladians independently sampled from the ensemble $\{\cL_a:\,a \sim \mu\}$. Then, for a total evolution time $T = M \tau$, the Lindblad dynamics \eqref{dynamic} can be approximated by the random product: \begin{align}\label{randomized}
e^{T \bar{\cL}} \approx e^{\tau {\cL_M}} \cdots  e^{\tau {\cL_2}}e^{\tau {\cL_1}}\,.
 \end{align}
The rigorous convergence analysis will be provided in Section \ref{sec:convergence}. The randomized approximation \eqref{randomized} reduces the task of simulating an ensemble of Lindbladians to {simulating a single Lindbladian $\mc{L}_a$ with simple $H_a$ and $V_a$} over short time intervals. This enables a simpler algorithmic implementation for each time step. For example, we can dilate the Lindbladian into a Hamiltonian acting on the system coupled with an ancilla qubit and then implement Hamiltonian simulation on the joint system \cite{ding2023singleancilla}. 


\subsection{Overview of Main Results}
\subsubsection{Analysis of randomized Lindbladian simulation}
Given an algorithmic realization of the single Lindbladian evolution $e^{\tau \cL_a}$,  denoted by $\cF_{\tau}(\cL_a)$, in practice, we simulate the product \eqref{randomized} by
\begin{align}\label{randomized-alg} \cF_{\tau}(\cL_M) \cF_{\tau}(\cL_{M-1}) \cdots \cF_{\tau}(\cL_1)\,.      
\end{align}
We study the convergence of this randomized algorithm in the following two setups.
\begin{itemize}
\item \emph{The Average Channel.} The mixture of all the random realizations is given by  
\begin{align} \label{average-channel}
\bar{\mathcal{E}}_{\tau, M} := \bigl(\mathbb{E}_{a \sim \mu} \mathcal{F}_{\tau}(\mathcal{L}_a)\bigr)^M\,,
\end{align}
which is a quantum channel capturing the average behavior of the product \eqref{randomized-alg}. 
\item \emph{The Random Channel.}  Without taking the average over the Lindbladian ensemble $\{\mc{L}_a\}_{a \in \mc{A}}$, {each instance} of our randomized method is given by the random channel:
\begin{align} \label{random-channel}
{\mathcal{E}}_{\tau, M} := \mathcal{F}_{\tau}(\mathcal{L}_M) \cdots \mathcal{F}_{\tau}(\mathcal{L}_1)\,,
\end{align}
{where $\mc{L}_i$ is independently sampled from $\{\cL_a:\,a \sim \mu\}$.} 
\end{itemize}

These two setups are motivated by the literature on the qDRFIT method for Hamiltonian simulation  \cite{Campbell_2019,Chen_2021}. In terms of observables, the error bound for the average channel \eqref{average-channel} controls the difference between the expectation values obtained from the randomized simulation algorithm and those of the actual dynamics, i.e., $|\EE \langle O \rangle_{\cE_{\tau,M}(\rho)} - \langle O \rangle_{e^{T\bar{\cL}}(\rho)}|$, {where $\l O \r_\rho: = \tr(O \rho)$ denotes the expectation of an observable $O$ with respect to a state $\rho$.} 
The analysis of typical realizations of the random channel \eqref{random-channel} further provides bounds on the deviation from the expected behavior $\EE |\langle O \rangle_{\cE_{\tau,M}(\rho)}  - \langle O \rangle_{e^{T\bar{\cL}}(\rho)}|$; {See Corollary \ref{coroobs} and Corollary \ref{coro2}.} 

Assuming that the approximation error of the numerical scheme $\cF_{\tau}(\cL_a)$ to $e^{\tau \mc{L}_a}$ is $\Or(\tau^2)$ (see remarks after Assumption \ref{as-alg0}), we provide a non-asymptotic error analysis of the randomized algorithm \eqref{randomized-alg} for simulating Lindblad dynamics in both the average and random setting. 
We first consider the convergence in the average sense.  Let $T= M \tau$ be the total evolution time. Theorem \ref{thm:aver_err} shows that the error $\|e^{T\bar{\cL}} - \bar{\cE}_{\tau, M}\|_{\diamond}$ between the ideal evolution and the averaged implementation \eqref{average-channel}, in terms of the diamond distance \eqref{eq:diamond}, scales as $\Or(T\tau) = \Or(T^2/M)$, matching the error bound of the qDRIFT protocol given in \cite{Campbell_2019} for Hamiltonian simulation.

Then, we study the convergence of the typical realization \eqref{random-channel} with respect to a \emph{weighted $\ell^2$-metric} \eqref{def:weightl2}, which is a variant of the standard quantum $\chi^2$-divergence \cite{temme2010chi}.
We show in Theorem \ref{thm-simulation} that the averaged weighted $\ell^2$-error of simulating \eqref{dynamic} via the random channel \eqref{random-channel} for time $T = M \tau$ generally scales as {$\Or(\sqrt{T\tau}) = \Or(T/\sqrt{M})$}. This result generalizes the analysis of the random product formulas for Hamiltonian systems \cite{huang2020matrix,Chen_2021} to open systems. It is noteworthy that we employ the weighted $\ell^2$-norm here rather than the more commonly used trace norm in the error analysis of quantum simulations.  The reason for this choice will be discussed in Remark \ref{rem2}. 

\subsubsection{Applications to quantum Gibbs sampling}

{In this work, we also examine the behavior of both the average and random quantum channels \eqref{average-channel}--\eqref{random-channel} in the context of quantum Gibbs sampling.} 

\medskip

\paragraph*{General Analysis.} First, we provide a general convergence analysis of applying the randomized Lindbladian simulation framework to the quantum Gibbs sampling task. Let $\sigma$ be the Gibbs state that one aims to prepare. We consider the Lindbladian evolution $\bar{\mathcal{L}}$ given in \eqref{dynamic}, which consists of an ensemble of Hamiltonian and jump operators. Suppose that this dynamics is primitive and preserves $\sigma$ as an invariant state, and has a spectral gap bounded below. Theorem \ref{thermal-convergence} shows that the $\chi^2$-divergence between the output of the average channel ${\bar{\cE}}_{\tau,M}(\rho)$ and the Gibbs state $\sigma$ {decays exponentially, up to an $\Or(\tau^2)$ error term. In contrast}, the $\chi^2$-divergence between the output of the random channel $\cE_{\tau,M}(\rho)$ and the Gibbs state $\sigma$ {decays exponentially with an error of $\Or(\tau)$}. This discussion is similar to the numerical analysis of the classical Langevin dynamic-based sampling algorithm. 

\medskip

\paragraph*{Clifford-random Davies Generator.} 
A notable example is the Clifford-random Davies generator, which employs an ensemble of jump operators with uncorrelated random Hermitian coupling matrices (see Assumption \ref{random-operator}). This construction enables efficient algorithmic implementation while guaranteeing rapid mixing for certain classes of models.

From a computational perspective, the random coupling matrix can be implemented using a unitary design, which admits efficient realization through random Clifford circuits \cite{Dankert2009}. This randomization approach enables the simulation of Lindbladian dynamics with exponentially many jump operators. It is also worth noting that, compared to the implementation of a general Davies generator \cite{chen2023quantum},  this random construction 
allows us to bypass the need for isolating Bohr frequencies by simulating the Heisenberg evolution of coupling operators for an infinitely long period and simultaneous block-encoding techniques. 


{Regarding convergence properties,} the Clifford-random Davies generator induces a classical random walk on the spectral domain, leading to a mixing time bound determined exclusively by the spectral density of the system. For strongly interacting Hamiltonians obeying the semicircle law, the spectral gap of this random Davies generator scales as $\Omega(\beta^{-3/2})$, where $\beta$ denotes the inverse temperature. This behavior is particularly relevant for systems described by random matrix theory, including SYK models of interacting fermions \cite{Kitaev2015,feng2019spectrum} and nuclear energy level statistics \cite{PhysRevLett.52.1}, where such spectral characteristics are observed.


Our result provides the first instance of a fast-converging quantum Gibbs sampling algorithm in the strongly interacting regime, extending previous works that primarily focused on weakly interacting systems \cite{kochanowski2024rapid, rouze2024efficient, bergamaschi2024quantum}. This advancement broadens the literature on efficient thermalization in quantum Gibbs sampling. 

We note that \cite{chen2023fastthermalizationeigenstatethermalization} also provides a fast thermalization analysis for ergodic open quantum systems by reducing the dynamics to a classical random walk on the spectrum. Their analysis is based on the assumption that the coupling operator satisfies the Eigenstate Thermalization Hypothesis (ETH). In contrast, we construct a coupling operator using random Hermitian matrices generated through Clifford random gates, which results in a simpler analysis and enables a concrete algorithmic implementation of the dynamics.

\section{Notations and Preliminaries}

In this section, we shall review the basics of the Lindblad dynamics and the quantum $\chi^2$-divergences. 
We first fix some notations that will be used throughout this work. Let $\mc{H}$ be a finite-dimensional Hilbert space with dimension $N = 2^n$. We denote by $\mc{B}(\mc{H})$ the space of bounded operators and by $\mc{D}(\mc{H}) : =\{\rho \in \mc{B}(\mc{H})\,;\ \rho \succeq 0\,, \tr (\rho) = 1\}$ the set of quantum states. {The subset of full-rank quantum states is then denoted by $\mc{D}_+(\mc{H})$.}
The trace norm and the Hilbert-Schmidt (HS) inner product on $\mc{B}(\mc{H})$ are defined by $\|X\|_1 := \operatorname{Tr}(\sqrt{X^{\dagger} X})$ and $\l X, Y \r := \tr (X^\dag Y)$ for $X, Y \in \mc{B}(\mc{H})$, respectively, where $X^\dag$ is the adjoint of $X$. We use $\|\cdot \|$ to denote the spectral norm and use $\|\cdot\|_F$ to denote the Frobenius norm. {We adopt the standard asymptotic notations $\Or$ and $\Omega$, where $f = \Or(g)$ if $f \le C g$ for some generic constant $C > 0$, which is independent of any parameters we are interested in, and $f=\Omega(g)$ if $g=\Or(f)$. For conciseness, we sometimes write $f \lesssim g$ to mean $f = \Or(g)$ (i.e., $f \leq C g$ for some $C > 0$).} 

 A superoperator $\Phi: \mc{B}(\mc{H}) \to \mc{B}(\mc{H})$ means a bounded linear operator on $\mc{B}(\mc{H})$. 
  Its adjoint with respect to the HS inner product is denoted by $\Phi^\dag$. 
 We also introduce the diamond norm for a superoperator $\Phi$:
\begin{align} \label{eq:diamond}
\|\Phi\|_{\diamond} = \sup_{X \in \cB( \cH \otimes \cH),\,\|X\|_1 = 1 } \|(\Phi \otimes {\rm id}) X \|_1\,,
\end{align}
where ${\rm id}$ is the identity map on $\mc{B}(\mc{H})$. 
We say that a superoperator $\Phi$ on  $\mc{B}(\mc{H})$ is a quantum channel if it is completely positive and trace preserving (CPTP). This means that $\Phi \mc{D}(\mc{H}) \subset \mc{D}(\mc{H})$, ensuring that a quantum channel always maps a quantum state to a quantum state. 

The \emph{Lindblad dynamics} (also called quantum Markov semigroup) is a $C_0$-semigroup of quantum channels $(e^{t\cL})_{t \ge 0} $, which is governed by the Lindblad master equation with the following GKSL form \cite{gorini1976completely,lindblad1976generators}:
\begin{align*}
\frac{\d}{\d t} \rho = \cL(\rho):=  - i [H, \rho] + \sum_{j \in \mc{J}} \left( V_j \rho V_j^\dagger - \frac{1}{2}\left\{V_j^\dag V_j, \rho  \right\} \right)\,.
\end{align*}
Here, $H \in \cB(\cH)$ is the system Hamiltonian, and $V_j \in \cB(\cH)$ are the jump operators characterizing the system-environment interaction.
The generator $\cL$ is usually referred to as the Lindbladian. Noting that for each $t > 0$, $e^{t\cL}$ gives a quantum channel, thus the Lindblad dynamics is contractive under the trace norm: for any quantum states $\rho_1,\rho_2 \in \cD(\cH)$ and time $t > 0$,
\begin{equation*}
  \left\|e^{t\cL}(\rho_1) - e^{t\cL}(\rho_2) \right\|_1 \le \left\|\rho_1 - \rho_2 \right\|_1\,. 
\end{equation*}
 

For a given full-rank state $\si \in \mc{D}(\mc{H})$, define the modular operator: $$\Delta_{\si}(X) = \si X \si^{-1}:\ \mc{B}(\mc{H}) \to \mc{B}(\mc{H})\,,$$
and the weighting operator:
\begin{align} \label{eq:modular}
    \Gamma_\si (X) = \si^{\frac{1}{2}}X\si^{\frac{1}{2}}:\ \mc{B}(\mc{H}) \to \mc{B}(\mc{H})\,.
\end{align} 
The \emph{KMS (Kubo-Martin-Schwinger) inner product} is given by
$$   \langle X, Y \rangle_{\sigma,1/2} = \langle X, \Gamma_{\sigma} (X) \rangle\,,  $$
which induces the KMS norm $\|X\|_{\sigma,1/2} =\sqrt{ \langle X, \Gamma_{\sigma} (X)\rangle}$. 

For self-adjoint $X \in \mc{B}(\mc{H})$, the variance of $X$ with respect to $\sigma \in \mc{D}_+(\mc{H})$ is defined as  \cite[Section 2.2]{muller2018sandwiched}
\begin{align*}
    \mathrm{Var}_\sigma(X): & = \left\| P_I^\perp (X)\right\|_{\sigma,1/2}^2 = \left\|X - \mathrm{Tr}(\sigma X) I\right\|_{\sigma,1/2}^2  \\ & = \|{X}\|_{\sigma,1/2}^2 - \mathrm{Tr}(\sigma X)^2\,,
\end{align*}
where $P_I^{\perp}$ denote the projection onto $\mathrm{span}\{{I}\}^{\perp}$ with respect to the KMS norm. This leads to the definition of \emph{quantum $\chi^2$-divergence} between two quantum states $\rho$ and $\sigma$ \cite{temme2010chi}:
\begin{equation}\label{eq:chidiver}
    \chi^2(\rho,\sigma) = \mathrm{Var}_\sigma(X) =  \|\Gamma_{\sigma}^{-1} (\rho - \sigma)\|_{\sigma,1/2}^2\,, 
\end{equation}
where $X = \Gamma_\si^{-1}(\rho)$ is the relative density of $\rho$ with respect to $\sigma$. 
For convenience, for $Y \in \cB(\cH)$, we use $ \|Y\|_{\sigma,-1/2}$, called the \emph{weighted $\ell^2$-norm}, to denote the KMS norm of $\Gamma_\si^{-1}(Y)$: 
\begin{equation} \label{def:weightl2}
    \|Y\|_{\sigma,-1/2}^2:= \|\Gamma_{\sigma}^{-1}(Y) \|_{\sigma,1/2}^2 = \l Y, \Gamma_\si^{-1}(Y)\r\,,
\end{equation}
which gives 
\begin{equation*}
{\chi^2(\rho,\sigma) = \|\rho - \si\|_{\sigma,-1/2}^2}\,.    
\end{equation*}
The metric $\|\cdot\|_{\sigma,-1/2}$ shall be used for measuring errors in the analysis of the random channel \eqref{random-channel}. 
{Finally, we remark that both the quantum $\chi^2$-divergence and the weighted $\ell^2$-norm can be framed within the broader context of monotone Riemannian metrics in quantum information theory \cite{petz1996geometries,lesniewski1999monotone}. 
Specifically, a Riemannian metric in this setting is a positive definite bilinear form $ M_\sigma(A, B)$ with $\sigma \in \mc{D}_+(\mc{H})$, defined for traceless Hermitian tangent operators $A, B$ in the space $\mc{T}_P$, given by
\begin{align*}
    \mc{T}_P := \{A \in \mc{B}(\mc{H}) \mid A = A^\dagger, \, \tr(A) = 0\}\,.
\end{align*}
The metric \( M_\sigma \) is called \textit{monotone} if, for any quantum channel \(\Phi\), state \(\sigma \in \mc{D}_+(\mc{H})\), and \( A, B \in \mc{T}_P \), it satisfies
\[
M_{\Phi(\sigma)}(\Phi(A), \Phi(B)) \leq M_\sigma(A,B).
\]
As shown in \cite{petz1996geometries,lesniewski1999monotone}, there exists a one-to-one correspondence between the family of monotone Riemannian metrics and a class of convex operator functions. Of particular relevance to our work is the following special one:
\[
M_\sigma(A,B) = \langle A, \Gamma_\sigma^{-1}(B) \rangle, \quad A, B \in \mc{T}_P\,.
\]
It is clear that $\chi^2$-divergence is recovered as $\chi^2(\rho,\sigma) = M_\sigma(\rho - \sigma, \rho - \sigma)$, and, upon relaxing the traceless condition for \( A, B \), $M_\sigma$ yields the weighted $\ell^2$-metric \eqref{def:weightl2}. The monotonicity of \( M_\sigma \) readily implies the contractivity of the weighted norm \(\|\cdot\|_{\sigma,-1/2}\) under any quantum channel \(\Phi\) (cf. \cite[Theorem 4]{temme2010chi}):}
\begin{equation} \label{weighcontra}
{\|\Phi(Y)\|_{\Phi(\sigma),-1/2} \leq \|Y\|_{\sigma,-1/2}.}
\end{equation}

Next, we recall the \emph{quantum detailed balance} condition and \emph{spectral gap} of Lindbladians. A Lindbladian $\mc{L}$ is said to satisfy the KMS detailed balance with respect to a full-rank quantum state $\sigma$ if $\cL^\dag $ is self-adjoint under the KMS inner product, equivalently, $\cL \Gamma_{\sigma} = \Gamma_{\sigma} \cL^\dag$. The KMS detailed balance condition ensures that $\sigma$ is an invariant state of the dynamics $e^{t\cL}$. For a detailed balanced $\cL$, the spectral gap is defined by the smallest nonzero eigenvalue of $-\cL$, which is a fundamental tool to characterize the convergence of the dynamics $e^{t\cL}$ under the quantum $\chi^2$-divergence. A Lindblad dynamics $e^{t \mc{L}}$ is called \emph{primitive} if it admits a unique full-rank invariant state. 

\begin{definition}\label{Spectral gap}
The spectral gap of a primitive detailed balanced Lindbladian $\cL$ for $\sigma \in \mc{D}_+(\mc{H})$ is given by
\begin{equation*}
     \lad(\mc{L}) = \inf_{X \in \mc{B}(\mc{H})\backslash\{0\}} -\frac{\langle X, \cL^\dag  X\rangle_{\sigma,1/2}}{\mathrm{Var}_\si(X)}\,.
\end{equation*}
\end{definition}

\begin{proposition}[{\cite[Lemma 12]{temme2010chi}}] \label{prop:gap}
Suppose the symmetrized Lindbladian $\frac{1}{2}(\cL^\dag + \cL^{\mathrm{KMS}})$ is primitive with spectral gap $\eta > 0$, where $\cL^{\mathrm{KMS}}$ denotes the adjoint of $\cL^\dag$ with respect to the KMS inner product. Then, for any quantum state $\rho$ and time $t \geq 0$, the $\chi^2$-divergence decays exponentially as
\begin{align*}
    \chi^2(e^{t \cL}(\rho), \sigma) \leq e^{-2\eta t} \, \chi^2(\rho, \sigma).
\end{align*}
\end{proposition}

We refer to \cite{temme2010chi,Carlen_2019} for a detailed discussion on the detailed balance condition and the spectral gap.
\section{Convergence Analysis for Lindbladian Simulation} \label{sec:convergence}
In this section, we analyze the error of the proposed randomized algorithm \eqref{randomized-alg} for approximating the exact Lindblad evolution $e^{t \bar{\cL}}$ {over a total time horizon $T = M\tau$, where $M$ is the number of iterations and $\tau$ is the step size.}

\subsection{Analysis of the Average Channel} \label{section-3.1}
We begin by analyzing the approximation error of the average channel $\bar{\cE}_{\tau,M}$ defined in \eqref{average-channel} in the diamond norm $\|\cdot\|_\diamond$. Our analysis requires the following key assumption: the spectral norms of both the Hamiltonians $\{H_a\}$ and jump operators $\{V_{a}\}$ appearing in the Lindbladians $\{\cL_a\}_{a\in\cA}$ are uniformly bounded. 

\begin{assumption} \label{as0}
There exists $\lad > 0$ such that for any $a \in \cA$, there holds 
$$     \|H_a \| \le \lambda\,, \quad \|V_a\|^2 \le \lambda\,,          $$
for the Hamiltonian and jumps involved in $\cL_a$ \eqref{eq:singlelind}.
\end{assumption}
The norm bound in Assumption \ref{as0} immediately implies the diamond norm bound for the Lindbladians.

\begin{lemma} \label{op-bound0}
Under Assumption \ref{as0}, the diamond norm of $\{\cL_a\}_{a \in \mc{A}}$ is uniformly bounded by
$$  \|\cL_a \|_{\diamond} \lesssim \lambda\,.   $$
\end{lemma}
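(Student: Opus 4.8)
The plan is to decompose $\cL_a$ into its elementary constituents and estimate each one under the diamond norm using only the triangle inequality and the submultiplicativity of Schatten norms. Write
$$\cL_a = \mc{C}_a + \mc{D}_a^{(1)} - \mc{D}_a^{(2)}, \qquad \mc{C}_a(\rho) := -i[H_a,\rho], \quad \mc{D}_a^{(1)}(\rho) := V_a \rho V_a^\dag, \quad \mc{D}_a^{(2)}(\rho) := \frac{1}{2} \{V_a^\dag V_a, \rho\}.$$
Since $\|\cdot\|_{\diamond}$ is a norm on superoperators, $\|\cL_a\|_{\diamond} \le \|\mc{C}_a\|_{\diamond} + \|\mc{D}_a^{(1)}\|_{\diamond} + \|\mc{D}_a^{(2)}\|_{\diamond}$, so it suffices to bound each of the three pieces by $O(\lambda)$.

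The single estimate that does all the work is the following: for any $A, B \in \cB(\cH)$, the multiplication superoperator $\Phi_{A,B}\colon X \mapsto AXB$ satisfies $\|\Phi_{A,B}\|_{\diamond} \le \|A\|\,\|B\|$. To see this, observe that $(\Phi_{A,B}\otimes I)(Y) = (A\otimes I)\,Y\,(B\otimes I)$ for every $Y \in \cB(\cH \otimes \cH)$; the H\"{o}lder inequality for Schatten norms then gives $\|(A\otimes I)\,Y\,(B\otimes I)\|_1 \le \|A\otimes I\|\,\|Y\|_1\,\|B\otimes I\| = \|A\|\,\|B\|\,\|Y\|_1$, where we used that the spectral norm is stable under tensoring with the identity, and taking the supremum over $\|Y\|_1 = 1$ proves the claim.

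Applying this bound with $A = H_a$, $B = I$ (and the adjoint term) yields $\|\mc{C}_a\|_{\diamond} \le 2\|H_a\| \le 2\lambda$; with $A = V_a$, $B = V_a^\dag$ it gives $\|\mc{D}_a^{(1)}\|_{\diamond} \le \|V_a\|\,\|V_a^\dag\| = \|V_a\|^2 \le \lambda$; and splitting the anticommutator into its two summands gives $\|\mc{D}_a^{(2)}\|_{\diamond} \le \|V_a^\dag V_a\| = \|V_a\|^2 \le \lambda$. Summing the three contributions gives $\|\cL_a\|_{\diamond} \le 4\lambda \lesssim \lambda$, as claimed. I do not anticipate any genuine obstacle here; the only point that warrants a moment's care is that the diamond norm ampliates by an auxiliary copy of $\cH$, but this is harmless precisely because the spectral norm is unchanged under tensoring with the identity and the Schatten-$1$ H\"{o}lder inequality remains valid on the enlarged space.
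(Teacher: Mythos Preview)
Your proof is correct and complete; the decomposition into left/right multiplication maps together with the H\"older bound $\|AXB\|_1 \le \|A\|\,\|X\|_1\,\|B\|$ and the stability of the spectral norm under $\otimes\, I$ is exactly the standard route to this estimate. The paper itself gives no proof and simply cites \cite[Section~4]{cleve2019efficient}, whose argument is essentially the one you have written out, so there is nothing to compare.
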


We refer to \cite[Section 4]{cleve2016efficient} for the proof of Lemma \ref{op-bound0}.  This bound is crucial for the analysis of the approximation error of the random product approximation \eqref{randomized}. We also make the following assumption that the local truncation error of implementing a single Lindbladian is of order $\Or(\tau^2)$.

\begin{assumption} \label{as-alg0}
Let $\{\mc{L}_a\}_{a \in \mc{A}}$ satisfy Assumption \ref{as0}. For the quantum simulation $\mc{F}_\tau(\mc{L}_a)$ of a Lindblad evolution $e^{\tau \mc{L}_a}$ up to a small time $\tau = \Or(\lad^{-1})$, we assume  
\begin{align} \label{local-truncation} 
\|\cF_\tau(\cL_a) - e^{\tau \cL_a}  \|_{\diamond} \lesssim \lambda^2    \tau^2\,.
\end{align}
\end{assumption}

We will demonstrate in Appendix \ref{app-B} that the condition \eqref{local-truncation} holds for the Hamiltonian-based algorithm presented by \cite{cleve2016efficient,ding2023singleancilla}. It is noteworthy that higher-order methods $\mc{F}_\tau$ are not helpful here. Even if the single-time-step evolution $e^{\tau \cL_a}$ is implemented exactly, the qDRIFT approximation \eqref{randomized} still results in an $\Or(\tau^2)$ local truncation error, {since the product \eqref{randomized}  is essentially a randomized compilation of the first-order Lie-Trotter decomposition.} Consequently, a first-order quantum simulation for $e^{\tau \cL_a}$, as proposed in \cite{cleve2016efficient, ding2023singleancilla}, is sufficient for our purposes. Additional examples of first-order methods for $e^{\tau \cL_a}$ can be found in \cite{cao2021structure, ding2024simulating}.


Under the above assumptions, we establish an upper bound in the diamond norm for the approximation error between the quantum channel $\bar{\cE}_{\tau,M}$ and the ideal evolution $e^{T\bar{\cL}}$. 

\begin{theorem} \label{thm:aver_err}
Under Assumptions \ref{as0} and \ref{as-alg0}, let $T = \tau M$ be the total simulation time with $\tau = \Or(\lambda^{-1})$. The diamond norm distance between the average channel \eqref{average-channel} and the exact evolution \eqref{dynamic} is bounded by
\begin{align*}
\|\bar{\cE}_{\tau,M}- e^{T\bar{\cL}} \|_{\diamond} \lesssim \frac{\lambda^2 T^2}{M}\,.
\end{align*}
\end{theorem}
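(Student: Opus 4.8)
The plan is to use a standard telescoping argument together with the contractivity of quantum channels, exactly as in the qDRIFT average-channel analysis. Write $\bar{\cP}_\tau := \EE_{a\sim\mu} e^{\tau\cL_a}$ and recall $\bar{\cE}_{\tau,M} = (\EE_a \cF_\tau(\cL_a))^M$ and $e^{T\bar\cL} = (e^{\tau\bar\cL})^M$. Using the telescoping identity $A^M - B^M = \sum_{k=0}^{M-1} A^{M-1-k}(A-B)B^k$ twice (or once for each of the two reductions below), together with submultiplicativity of the diamond norm and the fact that each of $\bar{\cP}_\tau$, $e^{\tau\bar\cL}$, and $\EE_a\cF_\tau(\cL_a)$ is a quantum channel (hence diamond-norm nonexpansive — the average of CPTP maps is CPTP), we get
\begin{align*}
\bigl\|\bar{\cE}_{\tau,M} - e^{T\bar\cL}\bigr\|_\diamond \le M \bigl\|\EE_a\cF_\tau(\cL_a) - e^{\tau\bar\cL}\bigr\|_\diamond \le M\Bigl(\bigl\|\EE_a\cF_\tau(\cL_a) - \bar{\cP}_\tau\bigr\|_\diamond + \bigl\|\bar{\cP}_\tau - e^{\tau\bar\cL}\bigr\|_\diamond\Bigr).
\end{align*}
So it suffices to show each single-step error is $O(\lambda^2\tau^2)$; then $M\cdot O(\lambda^2\tau^2) = O(\lambda^2 T\tau) = O(\lambda^2 T^2/M)$ since $T = M\tau$.

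For the first single-step term, Jensen/triangle inequality gives $\|\EE_a\cF_\tau(\cL_a) - \bar{\cP}_\tau\|_\diamond = \|\EE_a(\cF_\tau(\cL_a) - e^{\tau\cL_a})\|_\diamond \le \EE_a\|\cF_\tau(\cL_a) - e^{\tau\cL_a}\|_\diamond \le \lambda^2\tau^2$ directly by Assumption~\ref{as-alg0}. For the second term, this is the genuine qDRIFT-type estimate: $\bar{\cP}_\tau - e^{\tau\bar\cL} = \EE_a e^{\tau\cL_a} - e^{\tau\bar\cL}$. Taylor-expanding each exponential to second order, $e^{\tau\cL_a} = I + \tau\cL_a + \tfrac{\tau^2}{2}\cL_a^2 + R_a$ with $\|R_a\|_\diamond \le \tfrac{\tau^3}{6}\|\cL_a\|_\diamond^3 e^{\tau\|\cL_a\|_\diamond}$, and similarly $e^{\tau\bar\cL} = I + \tau\bar\cL + \tfrac{\tau^2}{2}\bar\cL^2 + \bar R$. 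The zeroth- and first-order terms cancel exactly because $\EE_a\cL_a = \bar\cL$. The second-order terms do not cancel, but $\|\tfrac{\tau^2}{2}(\EE_a\cL_a^2 - \bar\cL^2)\|_\diamond \le \tau^2 \max_a\|\cL_a\|_\diamond^2 \lesssim \lambda^2\tau^2$ using Lemma~\ref{op-bound0} (and $\|\bar\cL\|_\diamond \le \max_a\|\cL_a\|_\diamond \lesssim\lambda$ by convexity). The remainder terms are $O(\lambda^3\tau^3)$, which is subleading for $\tau\lesssim 1/\lambda$ (and absorbed into the $\lesssim$). Hence $\|\bar{\cP}_\tau - e^{\tau\bar\cL}\|_\diamond \lesssim \lambda^2\tau^2$.

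The only mild subtlety — and the step I would be most careful about — is justifying the diamond-norm nonexpansiveness used in the telescoping: the base maps being composed must be channels (CPTP), not merely bounded. For $e^{\tau\bar\cL}$ and $e^{\tau\cL_a}$ this is the Lindblad structure; for $\EE_a\cF_\tau(\cL_a)$ it holds because $\cF_\tau(\cL_a)$ is implemented as a quantum channel and convex combinations of channels are channels; $\bar{\cP}_\tau = \EE_a e^{\tau\cL_a}$ is likewise a channel. One should also note the remark in the excerpt that even exact $e^{\tau\cL_a}$ leaves an $O(\tau^2)$ error — this is precisely the non-cancelling second-order term $\EE_a\cL_a^2 \ne \bar\cL^2$ above, so no high-order refinement of $\cF_\tau$ improves the rate, which is consistent with the claimed bound being tight in $M$. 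Assembling the two single-step bounds and multiplying by $M$ completes the proof.
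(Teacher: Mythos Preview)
Your proposal is correct and follows essentially the same approach as the paper: both reduce to a single-step error $\|\EE_a\cF_\tau(\cL_a) - e^{\tau\bar\cL}\|_\diamond \lesssim \lambda^2\tau^2$ via the triangle inequality through $\bar{\cP}_\tau$, then accumulate linearly using contractivity of channels (you via the telescoping sum $A^M-B^M=\sum A^{M-1-k}(A-B)B^k$, the paper via the equivalent one-step recursion). Your inline Taylor-expansion argument for $\|\bar{\cP}_\tau - e^{\tau\bar\cL}\|_\diamond$ is exactly what the paper packages as Corollary~\ref{op-bound}, so the two proofs differ only in presentation.
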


The proof is given in Appendix \ref{sec:prof1}. We outline the proof idea below. From Assumption \ref{as-alg0}, the local truncation error of the average channel $\bar{\cE}_{\tau,M}$ is $\Or(\lambda^2 \tau^2)$. Combining this with the contractivity of the quantum channel, we know that the error accumulates linearly over time. Thus, we obtain an $M \Or(\lambda^2 \tau^2) = \Or(\lambda^2 T\tau) = \Or(\lambda^2 T^2/M)$ error bound. As an immediate corollary of Theorem \ref{thm:aver_err}, we obtain the following error bound for observable estimation.

\begin{corollary} \label{coroobs}
Under Assumptions \ref{as0} and \ref{as-alg0}, let $T = \tau M$ be the total simulation time. Let $\rho$ be an initial state, for any observable $O \in \mc{B}(\mc{H})$, we have, for $\tau = \Or(\lambda^{-1})$, 
\begin{equation*}
    \big|\EE \langle O \rangle_{\cE_{\tau,M}(\rho)} - \langle O \rangle_{e^{T\bar{\cL}}(\rho)} \big| \lesssim \|O\|\frac{\lambda^2 T^2}{M}\,.
\end{equation*}
\end{corollary}

\begin{remark}[Comparison with existing and concurrent works] \label{rem:compare}
{The Trotter-Suzuki product-formula algorithm is widely recognized as a simple yet efficient method for simulating Hamiltonian dynamics on a quantum computer; see \cite{Childs_2021} and references therein for recent developments. For simplicity, let $\cA$ be a discrete ensemble with a uniform distribution $\mu$. Then, the dynamics in \eqref{dynamic} reduce to the standard case with a finite number of jump operators:  
\begin{align}  \label{finite-jump}
\bar{\cL} =\frac{1}{|\cA|} \sum_{a\in \cA}\cL_{a}\,,
\end{align}  
where we assume that $\mc{L}_a$ is sufficiently simple, in the sense that simulating $e^{\tau \mc{L}_a}$ over a short time $\tau$ has a gate complexity of $\Or(1)$. Kliesch et al. \cite{Kliesch_2011} introduced the first quantum algorithm for simulating Lindblad dynamics, based on the first-order Lie-Trotter formula, with a gate complexity of $\Or\left(|\cA| \lambda^2 T^2 /\epsilon \right)$ for \eqref{finite-jump} under Assumption \ref{as0}. In contrast, our randomized approach reduces the complexity to $\Or\left(\lambda^2 T^2/\epsilon \right)$, making it independent of the number of jump operators. Thus, when $|\cA|$ is large, as is often the case in applications in thermal state preparation or electronic structure
Hamiltonians, our method could substantially reduce the simulation cost.

The scaling in simulation time $T$ and accuracy $\eps$ can be further improved using the second-order Suzuki formula, which achieves a gate complexity of $\Or(|\cA| \sqrt{\lambda^3 T^3/\eps})$, as shown in \cite{werner2016positive}. However, unlike Hamiltonian simulation, higher-order product formulas cannot be directly applied to Lindblad dynamics \cite{childs2016efficient}. This limitation arises because, unlike unitary dynamics, Markovian quantum dynamics cannot evolve backward in time while obtaining a CPTP map. More recent works \cite{childs2016efficient,cleve2016efficient,li2023simulating,ding2024simulating} have significantly improved simulation costs using techniques such as Stinespring dilation, oblivious amplitude amplification, and Dyson series expansions. These methods could achieve costs that scale linearly with $T$ and polylogarithmically with $\eps$, but they still depend at least linearly on $|\cA|$.  

Although product-formula-based algorithms do not achieve the optimal scaling with respect to $T$ and $\eps$, they are more practical in terms of implementation, particularly their lower-order forms, compared to the aforementioned state-of-the-art quantum algorithms for Lindbladian simulation, which are highly technical and often rely on complex control-logic operations. Our work enhances the efficiency of first-order product formulas for Lindbladian simulation, which remains valuable in the era of early fault-tolerant quantum computing.  

After completing the first manuscript, we became aware of concurrent works aimed at improving lower-order product-formula-based Lindbladian simulation \cite{borras2025quantum,peng2024quantum,david2024faster}. Specifically, Borras and Marvian \cite{borras2025quantum} proposed a quantum algorithm that combines randomized compiling with the second-order product formula, achieving a gate complexity of $\Or(\sqrt{\lambda^3 T^3/\eps})$. However, their method is limited to a special class of Lindbladians where the short-time evolution of the dissipative part is assumed to be approximately a convex combination of unitary channels with a fixed state.  
The work by Peng et al. \cite{peng2024quantum} introduced a new approximation channel for short-time Lindblad evolution, motivated by the classical quantum trajectory method. They designed two quantum algorithms with reduced dependence on the number of jump operators. Notably, their first algorithm \cite[Theorem 1]{peng2024quantum} achieves the same scaling as ours, but with an additional dependence on the system size $n$.  
Moreover, David et al. \cite{david2024faster} extended the randomized approach from Hamiltonian simulation \cite{childs2019faster} to Lindblad dynamics, proposing first- and second-order randomized Trotter-Suzuki formulas that achieve gate complexities of $\Or(|\mc{A}|\sqrt{\lambda^3 T^3/\eps})$ and $\Or(\sqrt{|\mc{A}|\lambda^3 T^3/\eps})$, respectively. The key idea is to randomize the order of the summands in the product formula, which is qualitatively different from the qDRIFT method. Additionally, in \cite[Section 5]{david2024faster}, a qDRIFT-type algorithm is proposed with the same error bound as ours (Theorem \ref{thm:aver_err}). However, in addition to this result, we also establish the error for each individual realization (see Section \ref{random-realization}).  
Beyond these algorithmic advances, another main contribution of our work lies in applying the proposed randomized framework to the quantum thermal state preparation (see Section \ref{sec:appgibbs}).}
\end{remark}



\subsection{Analysis of the Random Channel} \label{random-realization}
In this section, we will show that without taking the average of the randomized channel, each realization of the algorithm, modeled by the random channel $\cE_{\tau,M} = \cF_\tau(\cL_M) \cF_{\tau}(\cL_{M-1})\cdots \cF_{\tau}(\cL_1)$ in \eqref{random-channel}, will typically give a small numerical error for the quantum simulation task. Suppose that $\sigma \in \mc{D}_+(\cH)$ is a full-rank invariant state of \eqref{dynamic}, i.e., $\bar{\cL}(\sigma) = 0$. We consider the error under the weighted $\ell^2$-norm \eqref{def:weightl2}:
 \begin{equation} \label{eq:l2error}
 \begin{aligned}
 e_{\tau, M}(\rho) :=\big\|{\mathcal{E}}_{\tau, M}(\rho) - e^{T \bar{\mathcal{L}}} (\rho)\big\|_{\sigma,-1 / 2}\,.
 \end{aligned}
 \end{equation}
 For our analysis, we require several assumptions. First, we impose a \emph{weighted} operator norm condition on both the system Hamiltonian and jump operators comprising the Lindbladian, as the weighted $\ell^2$-norm is employed in the error analysis. This differs subtly from the standard operator norm bound in Assumption~\ref{as0} (Section~\ref{section-3.1}).

\begin{assumption} \label{as1}
{Given a full-rank invariant state
$\sigma$ of $e^{t \bar{\mc{L}}}$}, there exists $\Lambda > 0$ such that for any $a \in \cA$, 
\begin{align*}
\| \Delta_\sigma^{1/4} H_a   \| \le \Lambda\,,\quad    \| \Delta_{\sigma}^{1/4} V_a \|^2 \le \Lambda\,,\quad \| \Delta_{\sigma}^{1/4} V_a^\dag  \|^2 \le \Lambda\,,
\end{align*}
for the Hamiltonian and jump involved in $\cL_a$ \eqref{eq:singlelind}. Here $\|\cdot \|$ is the standard spectral norm.  
\end{assumption}

Similar to Lemma~\ref{op-bound0}, Assumption~\ref{as1} yields an operator norm bound $\|\mathcal{L}_a\|_{(\sigma,-1/2) \to (\sigma,-1/2)}$ for {Lindbladians $\mathcal{L}_a$}, with the proof provided in Appendix~\ref{sec:app1}.

\begin{lemma}\label{op-bound1}
Under Assumption \ref{as1}, for any $a \in \cA$, we have 
$$  \|\cL_a \|_{(\sigma,-1/2) \to (\sigma,-1/2)}  \lesssim \Lambda\,.   $$
\end{lemma}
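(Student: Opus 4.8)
The plan is to mimic the proof strategy of Lemma \ref{op-bound0} (the unweighted diamond-norm bound from \cite{cleve2019efficient}), but carried out entirely inside the KMS geometry. The key observation is that the $(\sigma,-1/2)$ norm is, by definition, a \emph{unitarily-twisted} Hilbert–Schmidt norm: for any $Y \in \cB(\cH)$ we have $\|Y\|_{\sigma,-1/2} = \|\Gamma_\sigma^{-1}Y\|_{\sigma,1/2} = \|\Gamma_\sigma^{-1/2} Y\|_F$, where $\Gamma_\sigma^{-1/2}X = \sigma^{-1/4}X\sigma^{-1/4}$. Conjugating the superoperator $\cL_a$ by $\Gamma_\sigma^{-1/2}$ therefore turns the claim into an ordinary Frobenius-norm (i.e. spectral-norm-of-superoperator) estimate: it suffices to bound $\|\Gamma_\sigma^{-1/2}\cL_a \Gamma_\sigma^{1/2}\|_{F \to F}$. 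I would first record this reduction explicitly, since it converts a weighted estimate into the familiar unweighted one.

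Next I would split $\cL_a$ into its coherent part $\cC_a(\rho) = -i[H_a,\rho]$ and its dissipative part $\cD_a(\rho) = V_a\rho V_a^\dag - \tfrac12\{V_a^\dag V_a,\rho\}$ and bound each after conjugation. For the coherent part, using $\Gamma_\sigma^{\pm 1/2}X = \sigma^{\mp1/4}X\sigma^{\mp1/4}$ and the identity $\sigma^{1/4}H_a\sigma^{-1/4} = \Delta_\sigma^{1/4}(H_a)$, one finds that $\Gamma_\sigma^{-1/2}\cC_a\Gamma_\sigma^{1/2}$ acts as $X \mapsto -i\big( (\Delta_\sigma^{1/4}H_a) X - X (\Delta_\sigma^{-1/4}H_a)\big)$, a sum of a left-multiplication and a right-multiplication whose Frobenius operator norms are $\|\Delta_\sigma^{1/4}H_a\|$ and $\|\Delta_\sigma^{-1/4}H_a\| = \|(\Delta_\sigma^{1/4}H_a)^\dag\| = \|\Delta_\sigma^{1/4}H_a\|$ (using $H_a^\dag = H_a$ and $\Delta_\sigma^{-1/4}X = (\Delta_\sigma^{1/4}X^\dag)^\dag$). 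Both are $\le \Lambda$ by Assumption \ref{as1}, giving $\lesssim \Lambda$. For the dissipative part I would similarly push the $\Gamma_\sigma^{\pm1/2}$ through each of the three terms, producing left/right multiplications by operators of the form $\Delta_\sigma^{\pm1/4}V_a$, $\Delta_\sigma^{\pm1/4}V_a^\dag$, and bound the resulting Frobenius-operator norms by products such as $\|\Delta_\sigma^{1/4}V_a\|\cdot\|\Delta_\sigma^{1/4}V_a^\dag\|$; Assumption \ref{as1} controls each of these factors by $\Lambda^{1/2}$, so the product terms are $\lesssim \Lambda$. Summing the coherent and (three) dissipative contributions by the triangle inequality yields $\|\cL_a\|_{\sigma,-1/2\to\sigma,-1/2}\lesssim \Lambda$.

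The only mildly delicate point — and the place I would be most careful — is the bookkeeping of which half of $\Gamma_\sigma^{\pm1/2}$ lands on which side of each operator, and hence whether a given factor appears as $\Delta_\sigma^{+1/4}$ or $\Delta_\sigma^{-1/4}$ acting on $V_a$ versus $V_a^\dag$. This is why Assumption \ref{as1} is stated with three separate conditions (on $H_a$, $V_a$, and $V_a^\dag$) rather than two: the appearance of $\Delta_\sigma^{-1/4}$ forces one to control the $\Delta_\sigma^{1/4}$-norm of the adjoint as well. Concretely, $\|\Delta_\sigma^{-1/4}V_a\| = \|\sigma^{-1/4}V_a\sigma^{1/4}\| = \|(\sigma^{1/4}V_a^\dag\sigma^{-1/4})^\dag\| = \|\Delta_\sigma^{1/4}V_a^\dag\|$, so each of the $\Delta_\sigma^{-1/4}$-type factors is still covered by the assumption. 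Once this correspondence is laid out, the estimate is routine; there is no genuine analytic obstacle, only the need to track adjoints correctly.
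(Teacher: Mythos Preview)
Your proposal is correct and is essentially the same argument as the paper's: both rewrite $\|Y\|_{\sigma,-1/2}=\|\sigma^{-1/4}Y\sigma^{-1/4}\|_F$, split $\cL_a$ term by term, and reduce each term to left/right multiplications whose spectral norms are $\|\Delta_\sigma^{\pm 1/4}H_a\|$, $\|\Delta_\sigma^{\pm 1/4}V_a\|$, $\|\Delta_\sigma^{\pm 1/4}V_a^\dag\|$, all controlled by Assumption~\ref{as1}. The paper packages the left/right-multiplication bounds as the two inequalities $\|\Gamma_\sigma^{-1}(AB)\|_{\sigma,1/2}\le\|\Gamma_\sigma^{-1}A\|_{\sigma,1/2}\|\sigma^{1/4}B\sigma^{-1/4}\|$ and $\|\Gamma_\sigma^{-1}(BA)\|_{\sigma,1/2}\le\|\Gamma_\sigma^{-1}A\|_{\sigma,1/2}\|\sigma^{-1/4}B\sigma^{1/4}\|$, which is exactly your ``conjugate by $\Gamma_\sigma^{-1/2}$ and bound $\|\cdot\|_{F\to F}$'' reduction stated without introducing the conjugated superoperator explicitly. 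One minor bookkeeping slip: in your coherent computation the left multiplication is actually by $\Delta_\sigma^{-1/4}H_a$ and the right by $\Delta_\sigma^{1/4}H_a$ (the opposite of what you wrote), but since $H_a=H_a^\dag$ the two norms coincide, so the bound is unaffected---precisely the adjoint-swap you already flagged.
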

We also assume a local truncation error bound for the algorithm $\cF_{\tau}$ under the $\|\cdot\|_{\sigma,-1/2}$ norm.
\begin{assumption}\label{as-alg1}
{Given $\sigma \in \mc{D}_+(\cH)$ as in Assumption \ref{as1}},
the algorithm $\cF_{\tau}$ for implementing a single-step Lindblad dynamics satisfies
\begin{align}
\left\|e^{\tau \cL_a}(\rho) - \cF_{\tau}(\cL_a)(\rho)  \right\|_{\sigma,-1/2} \lesssim \Lambda^2 \tau^2 \|\rho\|_{\sigma,-1/2}\,,
\end{align}
uniformly in $\rho \in \cD(\cH)$ and $a\in \cA$. 
\end{assumption}

The truncation error condition in Assumption \ref{as-alg1} applies to general first-order open quantum system simulations, analogous to the diamond-norm condition in Assumption \ref{as-alg0}. This is because the local truncation error primarily depends on the order of the Taylor expansion. As a concrete example, we verify in Appendix \ref{app-B} that the Hamiltonian simulation-based algorithm from \cite{cleve2016efficient,ding2023singleancilla} satisfies Assumption \ref{as-alg1}.

 
\begin{theorem}  \label{thm-simulation}
Suppose Assumptions \ref{as1} and \ref{as-alg1} hold. Let $T = \tau M$ be the total simulation time with {$\tau = \Or(\min\{\frac{1}{\Lambda},\frac{1}{\Lambda^2 T}\})$}. 
For any initial state $\rho$, the average $\|\cdot\|_{\sigma,-1/2}$-error \eqref{eq:l2error} between the random channel \eqref{random-channel} and the exact evolution \eqref{dynamic} is bounded as follows:
\begin{equation} \label{eq:error1}
  {\EE [e_{\tau,M}(\rho)]   \lesssim   \frac{\Lambda T}{\sqrt{M}}\sqrt{1+\chi^2(\rho,\sigma)}\,.}   
\end{equation}
\end{theorem}

For general Lindblad dynamics, we established a half-order convergence rate {of $\Or(\Lambda \sqrt{T \tau})$ in Theorem \ref{thm-simulation}. If we further assume that the dynamics $e^{t \bar{\cL}}$ is primitive and admits a spectral gap (see Theorem \ref{as2} below), the estimate \eqref{eq:error1} can be refined with an improved prefactor for the initial $\chi^2$-divergence $\chi^2(\rho\,,\sigma)$ term:
\begin{align} \label{improvest}
\mathbb{E} [e_{\tau,M}(\rho)] \lesssim \frac{\Lambda T}{\sqrt{M}} + \frac{\Lambda T}{M} \sqrt{\frac{1 - e^{-2 \eta T}}{1 - e^{- 2 \eta \tau}} \chi^2(\rho,\sigma)}\,.
\end{align}
The proofs of Theorem \ref{thm-simulation} and the refined estimate \eqref{improvest} are provided in Appendix~\ref{proof-thm-simulation}.} 
The key insight of the proof lies in leveraging the independence of random errors across time steps. Intuitively, for a single step, the local truncation error $e^{\tau \bar{\cL}} - e^{\tau {\cL_a}}$ is of order $\Or(\tau)$. The total error behaves as an independent sum of $\Or(\tau)$ errors over $M$ steps, scaling as $\Or(\sqrt{M}\tau) = \Or(M^{-1/2})$, as suggested by the central limit theorem (see also Remark \ref{rem2}). Similarly to Corollary \ref{coroobs}, we can also derive an error bound for observable estimation.


\begin{corollary} \label{coro2}
Under Assumptions \ref{as1} and \ref{as-alg1}, let $T = \tau M$ be the total simulation time with {$\tau = \Or(\min\{\frac{1}{\Lambda},\frac{1}{\Lambda^2 T}\})$}, and $\cE_{\tau,M}$ be the random channel given in \eqref{random-channel}. 
For any observable $O \in \mc{B}(\mc{H})$, we have
$$ 
    {\EE \big|\langle O \rangle_{\cE_{\tau,M}(\rho)}  - \langle O \rangle_{e^{T\bar{\cL}}(\rho)} \big| \lesssim \|O\|_{\sigma,1/2} \frac{\Lambda T}{\sqrt{M}}\sqrt{1+\chi^2(\rho,\sigma)}\,.} 
$$
\end{corollary}

\begin{proof}
{It suffices to note 
\begin{align*}
    \langle O \rangle_{\cE_{\tau,M}(\rho)}  - \langle O \rangle_{e^{T\bar{\cL}}(\rho)} & = \tr \left( O \big(\cE_{\tau,M}(\rho) - e^{T\bar{\cL}}(\rho) \big) \right) \\
    & = \left\l O, \Gamma_\si^{-1}\big(\cE_{\tau,M}(\rho) - e^{T\bar{\cL}}(\rho) \big) \right\r_{\si,1/2} \\
    & \le \| O\|_{\sigma,1/2} \big\| \cE_{\tau,M}(\rho) - e^{T\bar{\cL}}(\rho) \big\|_{\si,-1/2}\,.
\end{align*}
The proof is complete by \eqref{eq:error1}.}
\end{proof}

\begin{remark}[Discussion on the choice of metric] \label{rem2}
The intuition that the sum of $M$ independent terms scales as $\Or(\sqrt{M})$ requires the underlying norm to define a \emph{type-2 space} \footnote{
A Banach space $(X, \|\cdot\|)$ is called a \emph{type-2 space} if there exists a constant $C>0$ such that for any finite set $\{f_i\}_{i=1}^n \subset X$,
\[
\left(\mathbb{E}\left\|\sum_{i=1}^n \varepsilon_i f_i\right\|^2\right)^{1/2} \leq C\left(\sum_{i=1}^n\|f_i\|^2\right)^{1/2},
\]
where $\{\varepsilon_i\}$ are independent Rademacher random variables with $\mathbb{P}(\varepsilon_i = \pm1) = \frac{1}{2}$.
}. {This property serves as a probabilistic generalization of the Pythagorean identity for Hilbert spaces and provides a prototype for the more general concepts of uniform smoothness and subquadratic averages in Banach space norms for the concentration analysis; see \cite{ledoux2013probability,huang2020matrix} for further details. It is worth noting that any Hilbert space, and more generally the space of $p$-th Schatten-class operators ($p \ge 2$), qualifies as a type-2 space.} However, the matrix space with the trace norm does not fall into this category, meaning we cannot expect a similar $\mathcal{O}(\sqrt{M})$ growth in this case. 
As a remark, Chen et al. \cite{Chen_2021} analyzed concentration for the qDRIFT random product formula for Hamiltonian simulation using the trace norm. {However, their proof essentially relies on the spectral norm for unitary matrices or the $\ell^2$-norm for pure states (vectors).} For the analysis of open quantum systems, one must consider mixed quantum states, where the \(\ell^2\)-norm (i.e., the Frobenius norm) becomes problematic. This is because Lindblad dynamics are not contractive under this norm, leading to the potential accumulation of simulation error at an exponential rate over time \(T\); see \cite[Section D.2]{ding2023singleancilla}.

{Our analysis instead employs the weighted $\ell^2$-norm $\|\cdot\|_{\sigma,-1/2}$, a variant of the monotone Riemannian metric \cite{lesniewski1999monotone}.} This norm is both type-2, {thanks to the Hilbert space structure,} and contractive under quantum channels, making it suited for our purposes. {Theorem \ref{thm-simulation} ensures that, for a \emph{fixed but arbitrary} initial state \(\rho\), the relative simulation error \(\EE [e_{\tau,M}(\rho)]/\|\rho\|_{\sigma,-1/2}\) is small. In other words, the simulation error is small compared to the weighted $\ell^2$-norm of the given initial state. In the worst-case scenario, the prefactor $\chi^2(\rho,\sigma)$ in the bound \eqref{eq:error1} can scale exponentially with the number of qubits $n$, by noting
\begin{align*}
\max_{\rho \in \mc{D}(\mc{H})} \chi^2(\rho,\sigma) = \sigma_{\min}^{-1} - 1 = \Omega(2^n)\,.
\end{align*}
Here $\sigma_{\min}$ is the minimal eigenvalue of $\sigma$, bounded above by $1 / \dim(\mc{H}) = 2^{-n}$.}
\end{remark}

\section{Applications in Quantum Gibbs Sampling} \label{sec:appgibbs}

In this section, we present a randomized algorithm for quantum Gibbs sampling and its analysis.  
Let $\sigma \propto e^{-\beta H}$ be the thermal state we aim to prepare, where $H \in \mathbb{C}^{N \times N}$ is a quantum many-body Hamiltonian with $N = 2^n$. 

\subsection{General Analysis}

We begin by analyzing the general case where an ensemble of jump operators generates dynamics \eqref{dynamic} with average generator $\bar{\cL}$ that converges to the thermal state $\sigma$. Under this framework, we establish $\chi^2$-divergence convergence bounds for both the deterministic average channel $\bar{\cE}_{\tau, M}$ in \eqref{average-channel} and the randomized implementation $\cE_{\tau, M}$ in \eqref{random-channel}. The analysis requires the primitivity of the dynamics, specifically, that the spectral gap of the symmetrized generator is bounded below, which we formalize in the Assumption \ref{as2} below. 


\begin{assumption} \label{as2}
The Lindbladian $\bar{\cL}$ is primitive {with the unique full-rank invariant state $\sigma \in \mc{D}_+(\mc{H})$.} Moreover, the spectral gap of its symmetrized generator {$(\bar{\cL}^\dag +  \bar{\cL}^{{\rm KMS}})/2$} is bounded below by $\eta > 0$. 
\end{assumption}

\begin{theorem} \label{thermal-convergence}
Under Assumptions \ref{as1},\ref{as-alg1},\ref{as2}, for a step size {$\tau = \Or\big(\min\{\frac{1}{\eta}, \frac{1}{\Lambda}, \frac{1}{T \Lambda^2}\}\big)$}, and any initial state $\rho \in \mc{D}(\cH)$,
\begin{itemize}
\item The average channel $\bar{\cE}_{\tau, M}$ in \eqref{average-channel} satisfies
\begin{align} \label{convergence-average-sampler}
     \chi^2\left({\bar{\mathcal{E}}}_{\tau, M}(\rho), \sigma\right) \lesssim e^{- 2 \eta \tau M} \chi^2({\rho},\sigma) + 
\frac{\tau^2 \Lambda^4}{\eta^2}\,.
\end{align}
\item The randomized channel ${{\cE}}_{\tau,M}$ in \eqref{random-channel} satisfies
\begin{align} \label{convergence-sampler}
 \EE\big[ {\chi^2(\cE_{\tau,M}(\rho),\sigma)} \big]  \lesssim \e^{-2\eta \tau M} \chi^2(\rho,\sigma) + \frac{\tau \Lambda^2}{\eta}\,.
\end{align}
\end{itemize}
\end{theorem}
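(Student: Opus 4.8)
The plan is to handle the two bullet points in parallel, since both follow the same one-step contraction-plus-accumulation scheme, differing only in the order of the error term ($O(\tau^2)$ for the average channel versus $O(\tau)$ for the random channel). First I would establish a one-step estimate for the ideal single-step evolution $e^{\tau\bar{\cL}}$ in the $\|\cdot\|_{\sigma,-1/2}$ metric. By Assumption \ref{as2} and the Proposition relating spectral gap to $\chi^2$-contraction, $e^{\tau\bar{\cL}}$ contracts the variance part by a factor $e^{-2\eta\tau}$; that is, writing $X = \Gamma_\sigma^{-1}\rho$ and recalling $\tr(\sigma X)$ is preserved along the dynamics (since $\sigma$ is a fixed point and $e^{\tau\bar\cL}$ is trace-preserving), we get $\|e^{\tau\bar\cL}\rho - \sigma\|_{\sigma,-1/2}^2 = \chi^2(e^{\tau\bar\cL}\rho,\sigma) \le e^{-2\eta\tau}\chi^2(\rho,\sigma)$.

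Next I would control the one-step discrepancy between the algorithmic channel and $e^{\tau\bar\cL}$. For the average channel this is exactly the local truncation error $\|e^{\tau\bar\cL} - \EE_\mu\cF_\tau(\cL_a)\|_{\sigma,-1/2\to\sigma,-1/2} \lesssim \Lambda^2\tau^2$, obtained as in \eqref{local-error} by combining Assumption \ref{as-alg1} with the $O(\Lambda^2\tau^2)$ bound on $\|\EE_\mu e^{\tau\cL_a} - e^{\tau\bar\cL}\|$ (which comes from Taylor expanding both sides and using Lemma \ref{op-bound1}). Then I would iterate: letting $\delta_k = \|\bar\cE_{\tau,k}\rho - \sigma\|_{\sigma,-1/2}$, the triangle inequality gives $\delta_k \le \|e^{\tau\bar\cL}(\bar\cE_{\tau,k-1}\rho - \sigma)\|_{\sigma,-1/2} + \|(e^{\tau\bar\cL} - \EE_\mu\cF_\tau(\cL_a))\bar\cE_{\tau,k-1}\rho\|_{\sigma,-1/2}$. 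The first term is $\le e^{-\eta\tau}\delta_{k-1}$ by the contraction above (noting $\bar\cE_{\tau,k-1}\rho$ has trace one so its deviation from $\sigma$ is a trace-zero operator, and the algorithmic channels are trace-preserving); the second is $\lesssim \Lambda^2\tau^2 \|\bar\cE_{\tau,k-1}\rho\|_{\sigma,-1/2}$, and I would bound $\|\bar\cE_{\tau,k-1}\rho\|_{\sigma,-1/2}$ uniformly (e.g. by $\delta_{k-1} + \|\sigma\|_{\sigma,-1/2}$, or by noting $\chi^2$ stays bounded) so that the inhomogeneous term is $O(\Lambda^2\tau^2(1+\sqrt{\chi^2(\rho,\sigma)}))$. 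Unrolling the recursion $\delta_k \le e^{-\eta\tau}\delta_{k-1} + C\Lambda^2\tau^2(1+\cdots)$ yields $\delta_M \le e^{-\eta\tau M}\delta_0 + C\Lambda^2\tau^2/(1-e^{-\eta\tau}) \lesssim e^{-\eta\tau M}\sqrt{\chi^2(\rho,\sigma)} + \Lambda^2\tau/\eta$; squaring (and absorbing cross terms, using $\tau \lesssim \eta/\Lambda^2$) gives \eqref{convergence-average-sampler}.

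For the random channel \eqref{convergence-sampler} the same recursion runs on $\EE[\delta_k^2]$ rather than $\delta_k$, which is where the martingale/independence structure enters. Conditioning on the first $k-1$ sampled Lindbladians, I would write $\cF_\tau(\cL_k)\bar Y = e^{\tau\bar\cL}\bar Y + (e^{\tau\cL_k} - e^{\tau\bar\cL})\bar Y + (\cF_\tau(\cL_k) - e^{\tau\cL_k})\bar Y$ where $\bar Y = \cE_{\tau,k-1}\rho - \sigma$, expand $\EE_k\|\cdot\|_{\sigma,-1/2}^2$, and use that $\EE_k(e^{\tau\cL_k} - e^{\tau\bar\cL}) = O(\Lambda^2\tau^2)$ in operator norm (the mean-zero-at-first-order cancellation) while $\EE_k\|(e^{\tau\cL_k} - e^{\tau\bar\cL})\bar Y\|_{\sigma,-1/2}^2 = O(\Lambda^2\tau^2)\|\bar Y\|_{\sigma,-1/2}^2$ — here the crucial point is that the variance of the single-step error is $O(\tau^2)$ even though the error itself is $O(\tau)$, which is exactly the type-2 gain. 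This produces $\EE_k\|\cF_\tau(\cL_k)\bar Y\|^2 \le (e^{-2\eta\tau} + C\Lambda^2\tau^2)\|\bar Y\|^2 + (\text{lower order})$, and the combination $e^{-2\eta\tau}+C\Lambda^2\tau^2 \le e^{-\eta\tau}$ for $\tau \lesssim \eta/\Lambda^2$ closes a clean geometric recursion on $\EE[\delta_k^2]$ with inhomogeneous term $O(\Lambda^2\tau \cdot \EE\|\bar Y\|^2 / \eta)$-type contributions; summing the geometric series gives the asymptotic floor $O(\Lambda^2\tau/\eta)$.

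The main obstacle I anticipate is the bookkeeping in the random-channel case: one must carefully separate the \emph{bias} of the single-step error (mean $O(\tau^2)$, which behaves like the deterministic case and is absorbed into the contraction factor) from its \emph{fluctuation} (second moment $O(\tau^2)$, which sums to the $O(\tau)$ floor), and handle the cross term between the surviving deterministic bias and the current deviation $\bar Y$ without losing the contraction — this is where the condition $\tau \lesssim \eta/\Lambda^2$ is used and where a naive triangle-inequality bound would give the wrong (order-$\tau^{1/2}$ floor, or non-contractive) answer. A secondary technical point is justifying the uniform boundedness of $\|\cE_{\tau,k}\rho\|_{\sigma,-1/2}$ (equivalently, that $\chi^2(\cE_{\tau,k}\rho,\sigma)$ does not blow up) along the iteration, which I would get self-referentially from the recursion itself once $\tau$ is small enough, or from the contractivity of each $\cF_\tau(\cL_a)$ in a suitable sense plus Assumption \ref{as-alg1}. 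Much of this machinery is presumably already set up in the proof of Theorem \ref{thm-simulation} in Appendix \ref{proof-thm-simulation}, so I would invoke those single-step lemmas directly and only add the spectral-gap contraction as the new ingredient.
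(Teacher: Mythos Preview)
Your proposal is correct and matches the paper's proof closely: the paper likewise establishes a one-step recursion on $\sqrt{\chi^2}$ (for the average channel) and on $\chi^2$ (for the random channel), uses exactly the spectral-gap contraction of $e^{\tau\bar\cL}$ combined with the orthogonal splitting of the mean-zero fluctuation $(e^{\tau\cL_a}-\bar\cP_\tau)$ via Lemma~\ref{orthgonal}, and then iterates the resulting geometric-plus-inhomogeneous recursion. One small correction to your random-channel bookkeeping: apply the decomposition to $\rho_{k-1}$ rather than to $\bar Y=\rho_{k-1}-\sigma$ (or equivalently include the missing term $\cF_\tau(\cL_k)\sigma-\sigma$), since $\sigma$ is not fixed by the individual $\cF_\tau(\cL_a)$ --- this is precisely what produces the additive $+C\Lambda^2\tau^2$, independent of $\bar Y$, that sums to the $O(\Lambda^2\tau/\eta)$ floor (the paper handles this by splitting $\rho=(\rho-\sigma)+\sigma$ inside the norm estimates to pick up the $\chi^2(\rho,\sigma)+1$ factors).
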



The proof is deferred to Appendix \ref{proof-thermal-convergence}. {Note that the spectral gap condition implies the exponential $\chi^2$-convergence $e^{t\bar{\cL}}(\rho)$ to $\sigma$ with rate $2 \eta > 0$, as stated in Proposition \ref{prop:gap}. In Theorem \ref{thermal-convergence}, we show that our quantum simulation preserves this exponential convergence, but introduces 
an $\mathcal{O}(\tau^2)$ (resp., $\mathcal{O}(\tau)$) additive error for the average channel $\bar{\cE}_{\tau,M}$ (resp., the randomized one $\cE_{\tau,M}$). Therefore, under the infinite-time evolution, the algorithm would converge to a state $\sigma_*$ close to the thermal state $\sigma \propto e^{-\beta H}$ with an $\mathcal{O}(\tau)$ error. The residual error arises because each implementation step $\mc{F}_\tau(\mc{L}_a)$ only approximately preserves the thermal state $\sigma$, rather than exact preservation.} This parallels the classical Langevin Monte Carlo algorithm with Euler-Maruyama discretization \cite{chewi2023log}, {where the invariant measure is similarly perturbed by the step size.}

\subsection{The Clifford-random Davies Generator}
In this section, we introduce a randomized construction of a Davies generator based on random Clifford circuits, showcasing the efficacy of our method in realizing an efficient quantum Gibbs sampler.

The Davies generator yields a class of quantum Markovian systems that thermalize to the Gibbs state $\sigma$. Let $\{A_a\}_{a \in \cA}$ be a set of coupling operators satisfying $\{A_a\}_{a \in \cA} = \{A_a^\dag\}_{a \in \cA}$. The Davies generator is described by 
\begin{align} \label{Davies-generator}
\begin{aligned}
\mathcal{L}_D(\rho)  = \sum_{a \in \mathcal{A}} \sum_{\omega \in B_H} \gamma(\omega)\left(A_a(\omega)\rho A_a(\omega)^{\dagger} -  \frac{1}{2}\left\{A_a(\omega)^{\dagger} A_a(\omega), \rho\right\}\right)\,,
\end{aligned}
\end{align}
where $\omega \in B_H:=\operatorname{spec}(H)-\operatorname{spec}(H)$ are the Bohr frequencies, the set of energy differences of the Hamiltonian. The weight function $\gamma(\cdot)$ is non-negative and satisfies the KMS condition 
\begin{align} \label{eq:kmscondition}
    \gamma(\omega)=e^{-\beta \omega} \gamma(-\omega)\,.    
\end{align}
For each Bohr frequency $\omega \in B_H$, the operator $A_a(\omega)$ in \eqref{Davies-generator} is defined as a restriction of $A_a$ onto the eigenstates with $\omega$ energy difference:
\begin{align} \label{def:aomega}
    A_a(\omega)=\sum_{\lambda_i-\lambda_j=\omega} \left|\psi_i\right\rangle\left\langle\psi_i\left|A_a\right| \psi_j\right\rangle\left\langle\psi_j\right|,
\end{align}
where $\lad_i$ are eigenvalues of $H$,  counting multiplicity, and $\ket{\psi_i}$ is the associated eigenfunction. 

We consider the specific case that the coupling operators $\{A_a\}_{a \in \cA}$, {satisfying $\{A_a\}_{a \in \cA} = \{A_a^\dag\}_{a \in \cA}$}, in the Davies generator are given by an ensemble of random Hermitian operators without correlations. In particular, we assume the random coupling operators satisfy the following condition:



\begin{assumption} \label{random-operator}
 Define $A_{i,j} = \bra{\psi_i}A\ket{\psi_j}$, and assume $$\EE A_{i,j}\overline{A_{l,k}} = {\xi^2} \delta_{i,l}\delta_{j,k}\,,$$ where  $\xi = \Theta(N^{-1/2})$ is a positive constant.
\end{assumption}


As shown in the following lemma (with proof given in Appendix \ref{app:random}) and Section \ref{sec:imple}, this type of random operator can be constructed using random Clifford circuits, so we refer to the generator as the Clifford-random Davies generator.  

\begin{lemma} \label{lemma-random-design}
The ensemble of coupling operators satisfying Assumption \ref{random-operator} can be constructed as follows. Let $D$ be a {diagonal random operator defined by} 
$D = \sigma_z^{p_1} \otimes \sigma_z^{p_2} \otimes \cdots \otimes \sigma_z^{p_n}$, where $p_1,\cdots,p_n$ are independent random binary variable. This ensures that
$$\EE \left[D_{ii}D_{jj}\right] = \delta_{ij}\,.$$
Let $U_2 \in \cB(\cH)$ be a random unitary sampled from the unitary 2-design, i.e., for any $O \in \cB( \cH \otimes \cH )$, 
$$ \EE \left[U_2^{\otimes 2} O U_2^{\dag \otimes 2} \right] = \EE_{U \sim \mathrm{Haar}} \left[U^{\otimes 2} O U^{\dag \otimes 2}\right]\,.$$
Then $A = U_2 D U_2^\dag$ is Hermitian and satisfies Assumption \ref{random-operator}.    
\end{lemma}

\subsubsection{Algorithmic Implementation of the Clifford-random Davies Generator} \label{sec:imple}

Before investigating the convergence behavior of the Clifford-random Davies Generator, we first discuss how to implement it on a quantum computer. Our approach utilizes the integral form representation of the jump operator given in \cite{ding2023singleancilla, ding2024efficient},  which is then implemented using our randomized algorithm.


Consider the Lindblad equation
\begin{align}\label{lindblad}
\frac{\d }{\d t} \rho = \bar{\cL}(\rho) = \EE [\cL_{K_a}(\rho)]\,,\quad
\cL_{K_a}[\rho] = K_a \rho K_a^\dag - \frac{1}{2}\{K_a^\dag K_a ,\rho \} \,, 
\end{align}
with $K_a$ given by the following integral form:
\begin{align} \label{def:kintegral}
K_a: =\int_{-\infty}^\infty f(s)e^{iHs}A_ae^{-iHs} \d s   = \sum_{i,j \in [N]} \hat{f}(\lambda_i - \lambda_j)\ket{\psi_i}\bra{\psi_i} A_a \ket{\psi_j} \bra{\psi_j}, 
\end{align}
where $\{A_a\}$ is an ensemble of Hermitian coupling operators satisfying Assumption \ref{random-operator} and $f$ is a weight function such that {its Fourier transform is given by}
\begin{align} \label{weight-def}
    \hat{f}(\omega) = \sqrt{\gamma(\omega)} \quad \text{for any}\  \omega \in B_H \subset \left[-\|H\|,\|H\|\right],
\end{align}
with $\gamma(\omega)$ satisfying \eqref{eq:kmscondition}. Then, the construction in \eqref{lindblad}, with the expectation taken over random coupling operators satisfying Assumption \ref{random-operator}, yields the Davies generator.

\begin{theorem} \label{thm-construction-Davies}
Under Assumption \ref{random-operator}, the generator of the Lindblad equation \eqref{lindblad} can be written as
 \begin{align*}
     \EE_a[\cL_{K_a}(\rho)] = \sum_{\omega \in B_H} \gamma(\omega) \EE_a\left( A_a(\omega) \rho A_a(\omega)^\dag - \frac{1}{2} \left\{A_a(\omega)^\dag A_a(\omega), \rho \right\}\right),
 \end{align*}
 which is a Davies generator. 
\end{theorem}
The proof is given in Appendix \ref{proof-thm-construction-Davies}.
To simulate the dynamic \eqref{lindblad}, we employ our randomized algorithm to reduce the problem to simulating $e^{\tau \cL_K}$ for randomly sampled $K$ at each time step. Applying the Hamiltonian simulation-based algorithm (See Appendix \ref{app-B} for details), each step $e^{\tau \cL_K}$  involves implementing the Hamiltonian simulation $e^{-i\sqrt{\tau} \tilde{K}}$, where $\tilde{K}$ is given by dilation of $K$ on the system coupled with an ancilla qubit:
$$ \tilde{K} = \left(\begin{array}{cc}
0 & K^{\dagger} \\
K & 0
\end{array}\right)$$
We outline the key idea of implementing $e^{-i\sqrt{\tau} \tilde{K}}$ here and refer to \cite[Section III]{ding2023singleancilla} for further implementation details and numerical analysis. First, we select $f$ as a sufficiently smooth and rapidly decaying function that satisfies \eqref{weight-def}. This enables us to truncate and discretize the integration, leading to:
$$ K \approx \sum_{l=0}^{M} f\left(s_l\right) e^{i H s_l} A e^{-i H s_l} w_l\,, $$
where $s_l$ are the discretization points and $w_l$ are some weights. Then, we obtain an approximation of the Hamiltonian $\tilde{K}$ as follows:
\begin{align} \label{K-decompose}
\tilde{K} \approx \sum_{l=0}^M \tilde{H}_l,
\end{align}
where $\tilde{H}_l$ is defined by
\begin{align*}
 \widetilde{H}_l & =\left(\begin{array}{cc}
0 & \overline{f\left(s_l\right)} e^{is_lH}Ae^{-is_lH} w_l \\
f\left(s_l\right) e^{is_lH}Ae^{-is_lH} w_l & 0
\end{array}\right)  
\end{align*}
Next, by Trotterizing the summation \eqref{K-decompose}, we only need to simulate each $\tilde{H}_l$ separately. Note that 
$\tilde{H}_l$ has a factorizing structure $\tilde{H}_l = \sigma_l \otimes e^{is_lH}Ae^{-is_lH}$ with
$\sigma_l = w_l\left(\sigma_x \operatorname{Re} f\left(s_l\right)+\sigma_y \operatorname{Im} f\left(s_l\right)\right)$, allowing us to express the evolution as
\begin{align*} e^{-i\sqrt{\tau} \tilde{H}_l} = (I \otimes e^{is_lH}U_2)e^{-i \sqrt{\tau} (\sigma_l \otimes D)}(I \otimes U_2^\dag e^{-is_lH}),
\end{align*}
which is a product of three unitaries. The left and right terms can be implemented using random Clifford circuits and the Hamiltonian simulation of $H$. For the middle term, we apply Trotterization once more, reducing the problem to simulating Hamiltonians formed by Pauli strings. This can be achieved by rotating each $\sigma_x$ or $\sigma_y$ operator to $\sigma_z$, entangling the $n+1$ qubits by CNOT gates, and applying a phase gate (see \cite[Section 4.2]{Whitfield_2011} for details).

\subsubsection{The Relation between Mixing Time and Hamiltonian Spectrum}

Next, we analyze the mixing time of the Clifford-random Davies generator. The randomness of the Hermitian operator $A$ enables us to characterize the convergence of the dynamics $\rho_t$. 

\begin{theorem} \label{thm-Davies}
Under the assumptions of Theorem \ref{thm-construction-Davies}, the solution $\rho_t$ to \eqref{lindblad} satisfies:
\begin{itemize}
\item Let $p_i(t) =\EE \bra{\psi_i} \rho_t \ket{\psi_i}$ be its diagonal part. The evolution of $p_i(t)$ satisfies the Pauli Master equation:
\begin{align}\label{pauli}
\frac{\d p_i}{ \d t} = \xi^2 \sum_{k \ne i}  \gamma(\lambda_i - \lambda_k) p_k - \gamma(\lambda_k - \lambda_i) p_i \,.
\end{align}
\item Let $Q_{ij}(t)= \EE \bra{\psi_i} \rho_t \ket{\psi_j}$ with $i \neq j$ be the off-diagonal part. We have
\begin{align} \label{coherencef}
\frac{\d Q_{ij}}{ \d t} = -\frac{\xi^2}{2} \sum_k \left(\gamma(\lambda_k - \lambda_i) + \gamma(\lambda_k - \lambda_j) \right) Q_{ij}\,.
\end{align}
\end{itemize}
\end{theorem}

The proof is deferred to Appendix \ref{proof-fast-mixing}. Equation \eqref{pauli} shows that the diagonal part of the dynamics is governed by a Pauli master equation, which corresponds to a classical Metropolis-Hastings algorithm on the spectrum domain. Meanwhile, \eqref{coherencef} indicates that each off-diagonal entry $\ket{\psi_i}\bra{\psi_j}$ is an eigenstate of the dynamics, with its coefficient decaying exponentially at a rate determined by the spectral structure of $H$ and the function $\gamma$. 

Since the equations \eqref{pauli}-\eqref{coherencef} for a given $\gamma$ depend only on the eigenvalues of the Hamiltonian, the system's spectrum fully determines the mixing time of the random Davies generator.  In particular, we show that {when $\gamma$ is defined using the Metropolis transition rate}, the spectral gap of $\bar{\mc{L}}$ is lower-bounded by the density of low-energy states of $H$. 
 
\begin{theorem} \label{fast-mixing}
Let $r_\beta(H)$ denote the ratio of the states within an energy window above the ground state:
\begin{align} \label{low-energy}
r_\beta(H) = \frac{\#\left\{\ket{\psi_i}: \lambda_i \leq \lambda_{\min }\left({H}\right)+ 1/\beta \right\}}{N}\,, \q \text{{for any $\beta > 0$}}\,,
\end{align}
Choosing $\gamma(\omega) =  \min(1,e^{-\beta \omega})$,  the spectral gap of the dynamic \eqref{lindblad} satisfies
\begin{align*}
\eta = \Omega(r_\beta(H))\,.
\end{align*}
\end{theorem}
The proof strategy involves analyzing the convergence of the diagonal and off-diagonal terms separately. The convergence of the diagonal part is studied using conductance-based techniques for classical Markov chains \cite{10.1214/aoap/1177005980}. Meanwhile, the off-diagonal terms exhibit exponential decay on an entrywise basis. We show that the convergence rates of both parts are determined by the density of low-energy states defined in \eqref{low-energy}.

When the low-energy density of the Hamiltonian is non-negligible, the lower bound provided in Theorem \ref{fast-mixing} indicates rapid mixing.  A notable example of this scenario is a strongly interacting system that exhibits ergodic behavior. Under certain scenarios, such systems are believed to satisfy the Eigenstate Thermalization Hypothesis (ETH), and their spectra are well-described by the Gaussian Unitary Ensemble (GUE) random matrix theory \cite{D_Alessio_2016}.  Assuming that the spectral density of the system approximately follows the semicircle law (with appropriate normalization), we have
$$ \frac{\#\left\{\ket{\psi_i}: \lambda_i \in [x,x+dx]\right\}}{N} \approx \frac{\sqrt{4-x^2}}{2\pi}. $$
Thus, the low-energy density $r_\beta(H)$ in \eqref{low-energy} scales as:
\begin{align} \label{low-energy-bound}
\begin{aligned}
     r_\beta(H) & = \frac{\#\left\{\ket{\psi_i}: \lambda_i \leq \lambda_{\min }(H)+ 1/\beta \right\}}{N}  \\ & \approx \int_{-2}^{-2+1/\beta} \frac{\sqrt{4-x^2}}{2\pi} \d x  =  \Omega(\min(1, \beta^{-3/2}))\,,  \q \text{{for any $\beta > 0$}}\,.
\end{aligned}
\end{align}
In this case, combining \eqref{low-energy-bound} with Theorem \ref{fast-mixing}, the spectral gap of the dynamics scales polynomially with the inverse temperature $\beta$. This ensures polynomial-time convergence, as long as {the inverse temperature $\beta$ is not exponentially large in the system size $n$ (i.e., the system is not at very low temperature).}

\subsubsection{Provable Efficient Quantum Gibbs Samplers}
So far, we have discussed both the implementation and the mixing time of the dynamics driven by the Clifford-random Davies generator. From the complexity theory perspective, a natural question is: \emph{Does our algorithm provide a provably efficient quantum Gibbs sampling method for a class of Hamiltonians?} To achieve this, the Hamiltonian must satisfy the following conditions:
\begin{itemize}
\item The Hamiltonian $H$ is efficiently simulatable. This typically requires the Hamiltonian to be local or sparse.
\item The low-energy density of $H$ is non-negligible, i.e.,  $r_\beta(H)$ is not exponentially small. This property is often observed in strongly interacting systems exhibiting chaotic and ergodic behavior.
\end{itemize}

An example of a Hamiltonian that satisfies the sparsity condition and exhibits a semicircle spectrum is the random Pauli string model studied in \cite{chifang2023sparse}:
\begin{align}\label{random-pauli}
\begin{aligned}
    & {H}_{\mathrm{PS}}:=\sum_{j=1}^m \frac{r_j}{\sqrt{m}} {\sigma}_j \\ & \text{with } {\sigma}_j \stackrel{\mathrm{iid}}{\sim}\left\{{I}, {\sigma}_x, {\sigma}_y, {\sigma}_z\right\}^{\otimes n},\  r_j \stackrel{\mathrm{iid}}{\sim} \mathrm{Unif}\{+1,-1\}\,.
\end{aligned}
\end{align}
\cite[Theorem II.1]{chifang2023sparse} shows that for sufficiently large $m$, the low-energy density of the random Pauli string model is bounded below by $\Omega(\beta^{-3/2})$, consistent with the low-energy density of {the GUE in the large dimension limit described by the} semicircle law. 

\begin{lemma}[{Restatement of \cite[Theorem II.1]{chifang2023sparse}}]\label{lemma-random-Hamiltonian}
There exist absolute constants \( C_1, C_2, C_3 \) such that for any \( 1 \leq \beta \leq 2^{n / C_1} \), if \( H_{PS} \) is drawn from the Pauli string ensemble \eqref{random-pauli} with  
\( m \geq C_2 n^5 \beta^4 \), then with probability at least \( 1 - e^{-C_3 n^{1/3}} \), we have  
\[
r_\beta(H) = \Omega(\beta^{-3/2})\,.
\]
 \end{lemma}
Therefore, we conclude that for the random Pauli string model with an appropriately chosen polynomially large $m$, implementing the Clifford-random Davies generator using the randomized algorithm results in an efficiently convergent quantum Gibbs sampler. {We also refer interested readers to the more recent work \cite{ramkumar2024mixing} for additional examples of random Hamiltonians with non-negligible low-energy density and to \cite{brunner2412lindblad} for a generalized mixing time analysis under the eigenstate thermalization hypothesis.}

\section{Conclusion and Future Works}
In this paper, we have studied a qDRIFT-type method for simulating Lindblad dynamics and quantum Gibbs sampling. The key advantage of this randomized algorithm is its ability to simplify the simulation of the evolution of an ensemble of Lindbladians into the simulation of a single Lindbladian. For quantum Gibbs sampling, we introduced the Clifford-random Davies generator, which is both efficiently implementable and comes with an explicit spectral gap lower bound. We next discuss several future directions that could further improve the effectiveness and broaden the applicability of our methods.

\paragraph*{Mixing Time of Few-Body Sparse Random Hamiltonians.}  
Our work shows that the global randomness in the Clifford-random Davies generator enables rapid convergence for certain strongly interacting Hamiltonian systems with $n$-body interactions. However, many physically relevant strongly interacting models typically involve few-body interactions, for which the GUE approximation may break down. {For example, in the Sachdev-Ye-Kitaev (SYK) model, the global spectral distribution approaches a Gaussian in the limit of a large number of fermions ($n \to \infty$). For finite $n$, however, the spectrum is described by $Q$-Hermite polynomials,} with the tails of the average spectral density well-approximated by a semicircle law \cite{garcia2016spectral,garcia2017analytical,feng2019spectrum}. This discrepancy may cause our algorithm to underperform. A promising direction for future research is to develop new randomized jump operators that achieve fast mixing for Hamiltonians whose spectra adhere to the semicircle law only locally.

\paragraph*{Higher-order randomized methods for Lindbladian simulation.}
As noted in Remark \ref{rem:compare}, while our randomized algorithm is simple and practical, its complexity scaling in time $T$ and accuracy $\epsilon$ is not state-of-the-art. Specifically, the $\mathcal{O}(T^2/\epsilon)$ scaling for the average channel is comparable to the first-order Trotter method but is surpassed by more advanced methods for Lindbladian simulation \cite{childs2016efficient, cleve2016efficient, li2023simulating}. 
{A promising direction for future work is to adapt higher-order randomized compilation methods to the Lindbladian setting. This includes approaches like qFLO \cite{watson2024randomly} and qSWIFT \cite{nakaji2023qswift}, which were recently developed for Hamiltonian simulation. The qFLO method, for instance, enhances the standard qDRIFT protocol by incorporating Richardson extrapolation during classical post-processing. Similarly, qSWIFT achieves higher-order scaling by explicitly computing the higher-order terms of the qDRIFT channel as a function of the time step, a procedure that also relies on classical post-processing.}

\paragraph*{The concentration-type analysis for the random channel.}
Compared to the convergence analysis of the random product of unitaries \cite{huang2020matrix}, a limitation of our analysis for the random channel is that we only establish an error bound for the second moment and lack a concentration result with an exponentially small probability tail, {which requires higher moment estimates}. This limitation stems from technical challenges: the presence of multiplicative noise in our error analysis prevents us from deriving a  $q$-th moment bound for large $q$. Addressing this issue and developing a concentration-type analysis for this method is an important direction for future work.
 \begin{acknowledgments}
Bowen Li is partially supported by National Key R$\&$D Program of China, Grant No. 2024YFA1016000. Jianfeng Lu is supported in part by the National Science Foundation via grant DMS-2309378. Lexing Ying is supported by the National Science Foundation via grant DMS-2208163 and the U.S. Department of Energy, Office of Science, Accelerated Research in Quantum Computing Centers, Quantum Utility through Advanced Computational Quantum Algorithms, grant no. DE-SC002557. 
 \end{acknowledgments}
 \appendix
\section{Basic Facts and Lemmas} \label{sec:app1}

This section collects some basic facts and auxiliary proofs. 

\begin{lemma}\label{Taylor}
Let $\mc{B}(X)$ be the space of bounded linear operators on a Banach space $X$ with norm denoted by $\| \cdot \|_{\mc{B}(X)}$. Then, for small $A \in \mc{B}(X)$ and any $k > 0$, there is a constant $C_k > 0$ such that 
\begin{align*}
     \Big\|e^A - \sum_{m=0}^k \frac{A^m}{m!} \Big\|_{\mc{B}(X)} \leq C_k\|A\|_{\mc{B}(X)}^{k+1}\,.
\end{align*}
\end{lemma}

\begin{corollary}\label{op-bound}
Let $\bar{\cL} := \EE_{a \sim \mu} [\cL_a]$ be the expected Lindbladian in \eqref{dynamic}, and denote by $\bar{\cP}_{\tau} := \EE_{a \sim \mu} [e^{\tau \cL_a}]$ the expected single-step Lindblad evolution. For any norm $\|\cdot\|_X$ on the space of bounded linear operators on $\cB(\cH)$, assuming that $\sup_{a \in \cA} \| \cL_a\|_X \le C $, we have 
\begin{align*}
    \big\|e^{\tau \bar{\cL}} - \bar{\cP}_\tau \big\|_{X} \lesssim C^2 \tau^2\,, 
\end{align*}
and for any fixed $a \in \mc{A}$, 
\begin{align*}
     \big\|e^{\tau \bar{\cL_a}} - \bar{\cP}_\tau \big\|_{X} \lesssim C \tau \,. 
\end{align*}
\end{corollary}

We next give the proof of Lemma \ref{op-bound1}. 

\begin{proof}
First, we show that for any $A,B \in \cB(\cH)$, we have 
\begin{align} \label{000}
\|\Gamma_{\sigma}^{-1} (AB)\|_{\sigma,1/2} \le \|\Gamma_{\sigma}^{-1} (A)\|_{\sigma,1/2} \|\sigma^{1/4} B \sigma^{-1/4} \|\,,
\end{align}
and 
\begin{align} \label{001}
\|\Gamma_{\sigma}^{-1} (BA)\|_{\sigma,1/2} \le \|\Gamma_{\sigma}^{-1} (A)\|_{\sigma,1/2} \|\sigma^{-1/4} B \sigma^{1/4} \|\,.
\end{align}
Indeed, 
\begin{align*}
\|\Gamma_{\sigma}^{-1} (AB)\|_{\sigma,1/2} &  = \|\sigma^{-1/4}AB\sigma^{-1/4}\|_F = \|\sigma^{-1/4}A \sigma^{-1/4} \sigma^{1/4} B\sigma^{-1/4}\|_F  \\ & \le \|\sigma^{-1/4}A \sigma^{-1/4}\|_F \|\sigma^{1/4} B\sigma^{-1/4}\| = \|\Gamma_\sigma^{-1}(A) \|_{\sigma,1/2}\|\sigma^{1/4} B\sigma^{-1/4}\|\,,
\end{align*}
\begin{align*}
\|\Gamma_{\sigma}^{-1} (BA)\|_{\sigma,1/2} &  = \|\sigma^{-1/4}BA\sigma^{-1/4}\|_F =\|\sigma^{-1/4}B\sigma^{1/4} \sigma^{-1/4} A\sigma^{-1/4}\|_F \\ &  \le \|\sigma^{-1/4}A \sigma^{-1/4}\|_F \|\sigma^{-1/4}B\sigma^{1/4} \| = \|\Gamma_\sigma^{-1}(A)\|_{\sigma,1/2} \|\sigma^{-1/4}B\sigma^{1/4} \|\,.
\end{align*}
Thus, we complete the proofs of \eqref{000} and \eqref{001}. Using these, we obtain
\begin{align*}
\| \Gamma_{\sigma}^{-1} \left(i [H_a,\rho]\right) \|_{\sigma,1/2} & \le \|\Gamma_{\sigma}^{-1}(H_a\rho) \|_{\sigma,1/2} +\|\Gamma_{\sigma}^{-1}(\rho H_a) \|_{\sigma,1/2}  \\ & \le 2\|\sigma^{1/4} H_a \sigma^{-1/4}\| \|\Gamma_\sigma^{-1} (\rho) \|_{\sigma,1/2} \le 2\Lambda \|\Gamma_\sigma^{-1} (\rho) \|_{\sigma,1/2}\,,
\end{align*}
and
\begin{align*}
\| \Gamma_{\sigma}^{-1} \left(V_a\rho V_a^\dag \right) \|_{\sigma,1/2} \le \|\sigma^{-1/4}V_a \sigma^{1/4}  \|\|\sigma^{1/4} V_a^\dag  \sigma^{-1/4}  \| \|\Gamma_{\sigma}^{-1} (\rho)\|_{\sigma,1/2} \le \Lambda \|\Gamma_{\sigma}^{-1} (\rho)\|_{\sigma,1/2}\,,
\end{align*}
as well as
\begin{align*}
\|\Gamma_{\sigma}^{-1}(V_a^\dag V_a \rho  + \rho V_a^\dag V_a ) \|_{\sigma,1/2} & \le \|\Gamma_{\sigma}^{-1}(V_a^\dag V_a \rho  ) \|_{\sigma,1/2} + \|\Gamma_{\sigma}^{-1}( \rho V_a^\dag V_a ) \|_{\sigma,1/2} \\
& \le 2\|\sigma^{1/4} V_a^\dag V_a \sigma^{-1/4}  \|\|\Gamma_{\sigma}^{-1} \rho\|_{\sigma,1/2} = 2\Lambda \|\Gamma_{\sigma}^{-1}(\rho) \|_{\sigma,1/2}\,. 
\end{align*} 
The proof is complete by combining the above estimates. 
\end{proof}


\section{The Hamiltonian Simulation-based Algorithm} \label{app-B}

In this section, we discuss the implementation of a single Lindbladian operator $\cL_a$ using a Hamiltonian simulation-based algorithm \cite{ding2023singleancilla}. 
Recall from \eqref{eq:singlelind} that the Lindbladian $\cL_a$ is given by 
\begin{align*}
    \cL_a (\rho) = -i[H_a,\rho ] + V_a\rho V_a^{\dag} - \frac{1}{2}\{V_a^\dag V_a,\rho \}\,.
\end{align*}
For each time step, we employ the simplest first-order \emph{Trotter decomposition} to simulate $e^{\tau \mc{L}_a}$, which implements the Hamiltonian evolution and the dissipative one alternately. 

\begin{lemma} \label{firsttrotter}
Let $\cL_H (\rho) := -i[H,\rho]$ and $\cL_V(\rho) := V\rho V^\dag - \frac{1}{2} \{\rho, V^\dag V\}$ for given Hamiltonian $H$ and jump $V$, which satisfy $\cL_H (\si) = \cL_V (\si) = 0$.
\begin{itemize}
\item Under Assumption \ref{as0}, for any $\tau \ge 0$, we have 
\begin{align*}
    \big\| e^{\tau \cL_H }e^{\tau \cL_V} -  e^{\tau (\cL_H + \cL_V) } \big\|_{\diamond} = \Or(\lambda^2 \tau^2)\,.
\end{align*}
\item Under Assumption \ref{as1}, for any $\tau \ge 0$, we have  
\begin{align*}
\big\| e^{\tau \cL_H }e^{\tau \cL_V} -  e^{\tau (\cL_H + \cL_V) } \big\|_{(\sigma,-1/2) \to (\sigma,-1/2)} = \Or(\Lambda^2 \tau^2)\,.    
\end{align*}
\end{itemize}
\end{lemma}

\begin{proof}
This follows from the standard error analysis of the first-order Lie-Trotter formula, as presented in \cite[Proposition 9]{Childs_2021}.  {Using the variation-of-parameters formula,} we find 
\begin{align*}
    e^{t \cL_H }e^{t \cL_V} & = e^{t (\cL_H + \cL_V)} + \int_0^t \mathrm{d} s \,  e^{(t - s) (\cL_H + \cL_V)} \big[e^{s \mc{L}_H}, \mc{L}_V\big] e^{s \mc{L}_V} \\
    & = e^{t (\cL_H + \cL_V)} + \int_0^t \mathrm{d} s \,  e^{(t - s) (\cL_H + \cL_V)} e^{s \mc{L}_H}\big(\mc{L}_V -  e^{- s \mc{L}_H} \mc{L}_V e^{s \mc{L}_H} \big) e^{s \mc{L}_V}\,.
\end{align*}
Note 
\begin{align*}
    \mc{L}_V -  e^{- s \mc{L}_H} \mc{L}_V e^{s \mc{L}_H} = \int_0^{s} \mathrm{d}\tau \, e^{- \tau \mc{L}_H}\big[\mc{L}_H, \mc{L}_V \big]e^{\tau \mc{L}_H}\,.
\end{align*}
It follows that 
\begin{align*}
     e^{t \cL_H }e^{t \cL_V}
    & = e^{t (\cL_H + \cL_V)} + \int_0^t \mathrm{d} s \, \int_0^{s} \mathrm{d}\tau \, e^{(t - s) (\cL_H + \cL_V)} e^{s \mc{L}_H}  e^{- \tau \mc{L}_H}\big[\mc{L}_H, \mc{L}_V \big]e^{\tau \mc{L}_H} e^{s \mc{L}_V}\,.
\end{align*}
By the contractivity of $e^{t \cL_H}$ for $t \in \mathbb{R}$ and $e^{t \cL_V}$ for $t \ge 0$, we derive 
\begin{align*}
    \big\| e^{t \cL_H }e^{t \cL_V} - e^{t (\cL_H + \cL_V)} \big\|_\diamond & \le \int_0^t \mathrm{d} s \, \int_0^{s} \mathrm{d}\tau \, e^{(t - s) (\cL_H + \cL_V)} e^{s \mc{L}_H}  e^{- \tau \mc{L}_H}\big[\mc{L}_H, \mc{L}_V \big]e^{\tau \mc{L}_H} e^{s \mc{L}_V} \\
    & \le \frac{t^2}{2} \big\| \big[\mc{L}_H, \mc{L}_V \big] \big\|_\diamond \lesssim \lad^2 t^2\,.
\end{align*}
Similarly, we have $\| e^{t \cL_H }e^{t \cL_V} - e^{t (\cL_H + \cL_V)}\|_{(\sigma,-1/2) \to (\sigma,-1/2)} \lesssim \Lambda^2 t^2$. 
\end{proof}

 {From Lemma \ref{firsttrotter}, the single-step evolution $e^{t \mc{L}_a}$ can be approximately simulated by $e^{\tau \cL_{H_a}}e^{\tau \cL_{V_a}}$. We assume efficient quantum access to the Hamiltonian evolution $e^{\tau \cL_{H_a}}$.}
Next, we present the Hamiltonian simulation-based algorithm to implement the dissipative dynamics $e^{\tau \cL_V}$ with a single jump $V$.  Define the dilated operator
 \begin{equation} \label{dilatedop}
   \widetilde{V} := \begin{pmatrix}
   0 & V^\dagger \\
   V & 0
   \end{pmatrix},
   \end{equation}
which acts on the joint system-ancilla Hilbert space.
We prepare an ancilla qubit in state $|0\rangle$ and perform Hamiltonian simulation of $\widetilde{V}$ on the extended space for time $\sqrt{\tau}$, and then trace out the ancilla qubit:
\[
\Qcircuit @C=1em @R=1.5em {
    & \lstick{|0\rangle} & \multigate{1}{\exp(-i\widetilde{V}\sqrt{\tau})} & \meter & \rstick{\text{Discard}} \qw \\
    & \lstick{|\psi\rangle} & \ghost{\exp(-i\widetilde{V}\sqrt{\tau})} & \qw & \qw
}
\]
The following result verifies that this implementation satisfies the local truncation error conditions (Assumptions \ref{as-alg0} and \ref{as-alg1}) used in the analysis. 

\begin{lemma}
Let $\rho_t$ be the Lindblad dynamics defined by a single jump $V$: 
\begin{equation*}
    \frac{d\rho}{dt} = \cL_V(\rho) := V\rho V^\dag - \frac{1}{2}\{V^\dag V, \rho\},
\end{equation*}
and define the quantum channel $\cF_t$ for $t > 0$, 
\begin{equation*}
   \cF_t(\rho) := \mathrm{Tr}_a\left[e^{\sqrt{t}{\cL}_{\widetilde{V}}}\left(|0\rangle\langle 0| \otimes \rho \right)  \right] = \operatorname{Tr}_a \left[ e^{-i \widetilde{V} \sqrt{t}}(|0\rangle\langle 0| \otimes \rho) e^{i \widetilde{V} \sqrt{t}} \right]\,,
\end{equation*}
where $\widetilde{V}$ is given in \eqref{dilatedop} and ${\cL}_{\widetilde{V}}(\rho):= -i[\widetilde{V}, \rho]$, and $\mathrm{Tr}_a$ denotes the partial trace over ancilla qubit. 
\begin{enumerate}
    \item (Diamond norm bound) Under Assumption~\ref{as0},  {i.e., $\|V\|^2 \le \lad$, for $\tau = \Or(\lambda^{-1})$,}
    \begin{equation}  \label{diamond-bound}
        \|\cF_\tau - e^{\tau \cL_V}\|_\diamond = \Or(\lambda^2 \tau^2).
    \end{equation}  
    \item (Weighted $\ell^2$-norm bound) Under Assumption~\ref{as1},  {i.e., $\| \Delta_{\sigma}^{1/4} V \|^2 \le \Lambda$ and $\| \Delta_{\sigma}^{1/4} V^\dag  \|^2 \le \Lambda$, for $\tau = \Or(\Lambda^{-1})$,}
    \begin{equation} \label{chisquare-bound}
        \|\cF_\tau(\rho) - e^{\tau \cL_V}(\rho)\|_{\sigma, - 1/2} = \Or\left(\Lambda^2 \tau^2\right) \|\rho\|_{\sigma,- 1/2}\,.
    \end{equation}
     {Here $\sigma$ is an arbitrary full-rank quantum state.}  
\end{enumerate}
\end{lemma}

\begin{proof}
We begin with the third-order Taylor expansion of the simulated evolution:
\begin{equation}
e^{\sqrt{\tau} \cL_{\widetilde{V}}} + \Or(\tau^2) = T_{3,\cL_{\widetilde{V}}} := I + \sqrt{\tau} \cL_{\widetilde{V}} + \frac{1}{2}\tau \cL_{\widetilde{V}}^2 + \frac{1}{6}\tau^{\frac{3}{2}} \cL_{\widetilde{V}}^3\,,
\end{equation}
where $\cL_{\widetilde{V}}(\cdot) = -i[\widetilde{V}, \cdot]$. It is easy to see 
\begin{align*}
\mathrm{Tr}_a\left(\cL_{\widetilde{V}}(|0\rangle\langle 0| \otimes \rho)\right) & = 0\,,  \\
\mathrm{Tr}_a\left(\cL_{\widetilde{V}}^2(|0\rangle\langle 0| \otimes \rho)\right) & = \cL_V (\rho)\,,  \\
\mathrm{Tr}_a\left(\cL_{\widetilde{V}}^3(|0\rangle\langle 0| \otimes \rho)\right) & = 0\,.
\end{align*}
This allows us to define 
\begin{equation*}
\hat{\cF}_\tau (\rho) := \mathrm{Tr}_a\left(T_{3,\cL_{\widetilde{V}}}(|0\rangle\langle 0| \otimes \rho)\right) = (I + \tau \cL_V)\rho.
\end{equation*}
The simulation error decomposes as:
\begin{equation*}
\cF_\tau - e^{\tau \cL_V} = (\cF_\tau - \hat{\cF}_\tau) + (e^{\tau \cL_V} - (I + \tau \cL_V))\,.
\end{equation*}

For the diamond norm bound, we apply the triangle inequality to find
\begin{align}
\|\cF_{\tau} - e^{\tau \cL_V}\|_{\diamond} \leq \|\cF_{\tau} - \hat{\cF}_{\tau}\|_{\diamond} + \|e^{\tau \cL_V} - (I + \tau \cL_V)\|_{\diamond}.
\end{align}
Lemma~\ref{Taylor} provides the bound $\|e^{\tau \cL_V} - (I + \tau \cL_V)\|_{\diamond} = \Or(\lambda^2 \tau^2)$, so we focus on analyzing $\|\cF_\tau - \hat{\cF}_\tau\|_{\diamond}$.  
Using the definition of the diamond norm and properties of partial traces, we establish:
\begin{align}
\|\cF_\tau - \hat{\cF}_\tau\|_{\diamond} &= \sup_{\substack{\rho \in \cB(\cH^{\otimes 2}) \\ \|\rho\|_1 = 1}}  \|((\cF_\tau - \hat{\cF}_\tau) \otimes {\rm id})\rho\|_1 \nonumber \\
&\leq \sup_{\substack{\rho \in \cB(\mathbb{C}^2 \otimes \cH^{\otimes 2}) \\ \|\rho\|_1 = 1}} \|\mathrm{Tr}_a[((T_{3,\widetilde{V}} - e^{\sqrt{\tau}\cL_{\widetilde{V}}}) \otimes {\rm id})\rho]\|_1 \nonumber \\
&\leq \|T_{3,\widetilde{V}} - e^{\sqrt{\tau}\cL_{\widetilde{V}}}\|_{\diamond} = \Or(\lambda^2 \tau^2)\,,
\end{align}
where the second inequality follows from the contractivity of partial trace under trace norm, and the final bound comes from Lemma~\ref{Taylor} again with the estimate
\begin{align*}
    \|\cL_{\widetilde{V}}\|^4 = \Or(\|\widetilde{V}\|^4) = \Or(\|V\|^4) = \Or(\lambda^2)\,.    
\end{align*}
This completes the proof of \eqref{diamond-bound}.

We now establish the weighted $\ell^2$-norm error bound. Applying the triangle inequality and Lemma~\ref{Taylor} with Lemma \ref{op-bound1}, we decompose the error as:
\begin{align}
\|(\cF_{\tau} - e^{\tau \cL_V})\rho\|_{\sigma,-1/2} &\leq \|(\cF_{\tau} - \hat{\cF}_{\tau})\rho\|_{\sigma,-1/2} + \|(e^{\tau \cL_V} - (I + \tau \cL_V))\rho\|_{\sigma,-1/2} \nonumber \\
&\leq \|(\cF_{\tau} - \hat{\cF}_{\tau})\rho\|_{\sigma,-1/2} + \Or(\Lambda^2 \tau^2) \|\rho\|_{\sigma,-1/2}.
\end{align}
To bound $\|(\cF_{\tau} - \hat{\cF}_{\tau})\rho\|_{\sigma,-1/2}$, we analyze its Frobenius norm representation:
\begin{align}
& \left\| \mathrm{Tr}_a\left[\left( e^{-\sqrt{\tau} \cL_{\widetilde{V}}} - T_{3,\cL_{\widetilde{V}}}\right) (\ket{0}\bra{0} \otimes \rho) \right] \right\|_{\sigma,-1/2} \notag \\
 = & \left\|\sigma^{-1/4}  \mathrm{Tr}_a \left[\left(e^{-\sqrt{\tau} \cL_{\widetilde{V}}} - T_{3,\cL_{\widetilde{V}}}\right) (\ket{0}\bra{0} \otimes \rho)\right]\sigma^{-1/4} \right\|_{F} \notag  \\
 \lesssim & \left\|I \otimes \sigma^{-1/4}  \left(e^{-\sqrt{\tau} \cL_{\widetilde{V}}} - T_{3,\cL_{\widetilde{V}}}\right) (\ket{0}\bra{0} \otimes \rho) I \otimes \sigma^{-1/4} \right\|_{F}\,.  \label{100}
 \end{align} 
The key technical claim is that \eqref{100} is bounded by $\Or(\Lambda^2\tau^2)\|\Gamma_\sigma^{-1}\rho\|_{\sigma,1/2}$. For this, by Lemma~\ref{Taylor}, it suffices to verify the following operator norm bound:
\begin{equation}
\big\|(I \otimes \sigma^{-1/4})\cL_{\widetilde{V}}(\tilde{\rho})(I \otimes \sigma^{-1/4})\big\|_F \lesssim \sqrt{\Lambda}\big\|(I \otimes \sigma^{-1/4})\tilde{\rho}(I \otimes \sigma^{-1/4})\big\|_F
\end{equation}
for any $\tilde{\rho} \in \cB(\mathbb{C}^2 \otimes \cH)$. This follows from:
\begin{align*}
&\left\|(I \otimes \sigma^{-1/4})\cL_{\widetilde{V}}(\tilde{\rho})(I \otimes \sigma^{-1/4})\right\|_F \\
\leq  & \left\|(I \otimes \sigma^{-1/4})\widetilde{V}\tilde{\rho}(I \otimes \sigma^{-1/4})\right\|_F + \left\|(I \otimes \sigma^{-1/4})\tilde{\rho}\widetilde{V}(I \otimes \sigma^{-1/4})\right\|_F \\
\leq  & \left(\left\|(I \otimes \sigma^{-1/4})\widetilde{V}(I \otimes \sigma^{1/4})\right\| + \left\|(I \otimes \sigma^{1/4})\widetilde{V}(I \otimes \sigma^{-1/4})\right\|\right) \left\|(I \otimes \sigma^{-1/4})\tilde{\rho}(I \otimes \sigma^{-1/4})\right\|_F \\
\lesssim & \sqrt{\Lambda}\left\|(I \otimes \sigma^{-1/4})\tilde{\rho}(I \otimes \sigma^{-1/4})\right\|_F\,,
\end{align*}
where we used the submultiplicativity of the Frobenius norm and Assumption~\ref{as1}. This completes the proof of the weighted $\ell^2$-norm error bound \eqref{chisquare-bound}. 
\end{proof}

\section{Proof of Theorem \ref{thm:aver_err}} \label{sec:prof1}
\begin{proof}
 Let $\bar{\cP}_\tau = \EE_{a \sim \mu}[e^{\tau \cL_a}]$ be the average Lindblad evolution channel. 
 First, we apply Corollary \ref{op-bound} and Lemma \ref{op-bound0} to bound the distance between $\bar{\cP}_\tau$ and $e^{\tau \bar{\cL}}$:
 $$
 \big\| \bar{\cP}_\tau - e^{\tau \bar{\mathcal{L}}}\big\|_{\diamond} = \Or(\lambda^2 \tau^2)\,.$$
Combining this with the condition \eqref{local-truncation} and using the triangle inequality, the local truncation error of the average channel is bounded by
\begin{align} \label{local-error}
\begin{aligned}
\big\|e^{\tau \bar{\cL}} - \EE_{a \sim \mu} [\cF_{\tau}(\cL_a)]\big\|_\diamond &\le \big\|e^{\tau \bar{\cL}} -\bar{\cP}_\tau\big\|_\diamond  + \big\|\bar{\cP}_\tau - \EE_{a \sim \mu} [\cF_{\tau}(\cL_a)]\big\|_\diamond        \\
& \le \big\|e^{\tau \bar{\cL}} -\bar{\cP}_\tau\big\|_\diamond + \EE_{a \sim \mu} \big\|\cF_{\tau}(\cL_a) - e^{\tau \cL_a}\big\|_\diamond \\
& \lesssim \lambda^2 \tau^2\,.
\end{aligned}
\end{align}
Next, we split the $M$-step error as 
\begin{align}\label{123}
\begin{aligned}
 \big\|{\bar{\cE}}_{\tau,M}- e^{M\tau \bar{\cL}}\big\|_{\diamond}  \le  \big\|(e^{\tau\bar{\cL} } - \EE_{a \sim \mu} [\cF_{\tau}(\cL_a)]) {\bar{\cE}}_{\tau,M-1} \big\|_{\diamond}   + \big\| e^{\tau \bar{\cL}}(\bar{\cE}_{\tau,M-1}- e^{(M-1)\tau \bar{\cL}}  )\big\|_\diamond\,.
 \end{aligned}
\end{align}
The first term in the RHS of \eqref{123} is bounded by $\Or(\lambda^2 \tau^2)$ using \eqref{local-error}.
For the second term, we use the fact that $e^{\tau \bar{\cL}}  $ is contractive under the diamond norm: $ \| e^{\tau \bar{\cL}}(\bar{\cE}_{\tau,M} - e^{(M-1)\tau \bar{\cL}})  \|_\diamond  \le \|\bar{\cE}_{\tau,M-1}- e^{(M-1)\tau \bar{\cL}}   \|_\diamond $. Thus, there holds
\begin{align*}
\big\|\bar{\cE}_{\tau,M}- e^{M\tau \bar{\cL}} \big\|_\diamond = \Or(\lambda^2 \tau^2) + \big\|\bar{\cE}_{\tau,M-1}- e^{(M-1)\tau \bar{\cL}} \big\|_\diamond\,.
\end{align*}
By iteration, we derive 
$$
\big\|\bar{\cE}_{\tau,M} - e^{T\bar{\cL}} \big\|_{\diamond} = \Or(M \lambda^2 \tau^2) = \Or\left(\frac{\lambda^2 T^2}{M}\right)\,.
$$
We have completed the proof.
\end{proof}

\section{Proof of Theorem \ref{thm-simulation}}  \label{proof-thm-simulation}

\begin{proposition}\label{prop-local-chisquare}
Let $\rho_m:= \cE_{\tau,m}(\rho)$ be the output state of the randomized algorithm  {at the $m$-th step}, and $\rho(m\tau) := e^{m\tau \bar{\cL}}(\rho)$ be the exact state at time $m\tau$. Recall $e_{\tau,m}= \|\rho_m - \rho(m\tau) \|_{\sigma,-1/2}$, where $\sigma$ is a full-rank invariant state of $e^{t \bar{\cL}}$. Then, there exists constant $C > 0$,  {independent of $\rho$}, such that
\begin{align}\label{iteration-simulation} \EE\left[ e_{m+1}^2 | \cE_{\tau,m} \right] \le \left(1 +  C \Lambda^2 \tau^2 \right) e_{\tau,m}^2 + C\Lambda^2 \tau^2 \left(\chi^2(\rho(m\tau), \sigma) + 1\right).
\end{align}
\end{proposition}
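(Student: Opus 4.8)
The plan is to condition on the state $\rho_m := \cE_{\tau,m}(\rho)$ after $m$ steps and analyze the one-step error increment, exploiting the fact that the $(m{+}1)$-th Lindbladian $\cL_{m+1}$ is sampled independently of everything before. Write $\rho(m\tau) = e^{m\tau\bar\cL}\rho$ for the ideal state. The key identity is the one-step decomposition
\begin{align*}
\rho_{m+1} - \rho((m{+}1)\tau) = \cF_\tau(\cL_{m+1})\rho_m - e^{\tau\bar\cL}\rho(m\tau)
= \underbrace{e^{\tau\bar\cL}\bigl(\rho_m - \rho(m\tau)\bigr)}_{\text{``drift'' term}} + \underbrace{\bigl(\cF_\tau(\cL_{m+1}) - e^{\tau\bar\cL}\bigr)\rho_m}_{\text{``noise'' term}}\,.
\end{align*}
I would further split the noise term as $\bigl(\cF_\tau(\cL_{m+1}) - e^{\tau\cL_{m+1}}\bigr)\rho_m + \bigl(e^{\tau\cL_{m+1}} - e^{\tau\bar\cL}\bigr)\rho_m$, where the first piece is the algorithmic local truncation error, bounded by $O(\Lambda^2\tau^2\|\rho_m\|_{\sigma,-1/2})$ via Assumption \ref{as-alg1}, and the second piece is the qDRIFT sampling error. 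The crucial observation is that $\EE_{\cL_{m+1}}\bigl[e^{\tau\cL_{m+1}} - e^{\tau\bar\cL}\bigr] = \bar\cP_\tau - e^{\tau\bar\cL}$, which is $O(\Lambda^2\tau^2)$ in the relevant operator norm by Corollary \ref{op-bound} together with Lemma \ref{op-bound1}; so the sampling error is \emph{mean-zero up to an $O(\tau^2)$ bias}, even though each realization is only $O(\tau)$.

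Next I would take the conditional expectation of $e_{m+1}^2 = \|\rho_{m+1} - \rho((m{+}1)\tau)\|_{\sigma,-1/2}^2$. Grouping the drift term plus the bias (both deterministic given $\rho_m$, both controlled, with the drift being contractive) as the ``$Ax$'' part and the centered sampling error plus algorithmic error as the ``$y$'' part, I would invoke Lemma \ref{orthgonal} (applied in the Hilbert space $\cB(\cH)$ with the $\langle\cdot,\cdot\rangle_{\sigma,1/2}$ inner product on relative densities, after subtracting the conditional mean) to split $\EE[e_{m+1}^2 \mid \rho_m]$ into a ``signal'' square plus a ``variance'' square with no cross term. For the signal part I use contractivity of $e^{\tau\bar\cL}$ under $\|\cdot\|_{\sigma,-1/2}$ (which holds since $\sigma$ is a detailed-balance state of $\bar\cL$, so $e^{\tau\bar\cL}$ is a $\chi^2$-contraction, hence contractive in $\|\cdot\|_{\sigma,-1/2}$) plus the $O(\Lambda^2\tau^2)$ bias and Lemma \ref{op-bound1}, yielding $\le (1 + C\Lambda^2\tau^2) e_{\tau,m}^2 + C\Lambda^4\tau^4(\cdots)$. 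For the variance part I bound $\EE_{\cL_{m+1}}\|(e^{\tau\cL_{m+1}} - e^{\tau\bar\cL})\rho_m\|_{\sigma,-1/2}^2 \lesssim \Lambda^2\tau^2 \|\rho_m\|_{\sigma,-1/2}^2$ using Lemma \ref{op-bound1} (each single-realization error is $O(\Lambda\tau)$ in operator norm, and we are squaring), plus the $O(\Lambda^4\tau^4)$ algorithmic contribution.

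Finally I must convert the factor $\|\rho_m\|_{\sigma,-1/2}^2$ into the stated $\chi^2(\rho,\sigma)+1$. Here I would use $\|\rho_m\|_{\sigma,-1/2}^2 = \Var_\sigma(\Gamma_\sigma^{-1}\rho_m) + \tr(\rho_m)^2 = \chi^2(\rho_m,\sigma) + 1$ since $\tr\rho_m = 1$, and then bound $\chi^2(\rho_m,\sigma) \lesssim \chi^2(\rho,\sigma) + 1$: this follows because $\rho_m$ is obtained from $\rho$ by $m$ applications of channels each of which is $\cF_\tau(\cL_a)$, and these are (up to $O(\Lambda^2\tau^2)$ in $\|\cdot\|_{\sigma,-1/2}$) contractions on $\chi^2$, so over $M = T/\tau$ steps with $\tau\Lambda^2 T \lesssim 1$ the $\chi^2$ stays within a constant factor of $\chi^2(\rho,\sigma)$ plus a constant. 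The main obstacle I anticipate is precisely this last step: controlling $\|\rho_m\|_{\sigma,-1/2}$ uniformly in $m$ requires knowing that the randomized channel does not blow up the $\chi^2$-divergence, which itself needs a (crude) version of the very estimate we are proving; I would handle this by first establishing a standalone a priori bound $\EE\|\rho_m\|_{\sigma,-1/2}^2 \lesssim \chi^2(\rho,\sigma)+1$ using only that $\EE_{\cL}e^{\tau\cL}$ and $\cF_\tau$ are near-contractive and the step-size restriction $\tau\lesssim 1/(\Lambda^2 T)$, and then feed it back into the one-step recursion above. Summing the recursion \eqref{iteration-simulation} via a discrete Grönwall argument over $M$ steps then yields Theorem \ref{thm-simulation}.
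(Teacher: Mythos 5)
Your proposal is correct and follows essentially the same route as the paper: the same four-way decomposition of $\rho_{m+1}-\rho((m{+}1)\tau)$ into the contractive drift $e^{\tau\bar{\cL}}(\rho_m-\rho(m\tau))$, the mean-zero sampling fluctuation $(e^{\tau\cL_{m+1}}-\bar{\cP}_\tau)\rho_m$, the qDRIFT bias $(\bar{\cP}_\tau-e^{\tau\bar{\cL}})\rho_m$, and the algorithmic error $(\cF_\tau(\cL_{m+1})-e^{\tau\cL_{m+1}})\rho_m$; the same use of Lemma \ref{orthgonal} to kill the cross term; and the same weighted Young inequality to keep the leading coefficient at $1+O(\Lambda^2\tau^2)$. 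The one place you diverge is the step you correctly identify as the main obstacle: converting $\|\rho_m\|_{\sigma,-1/2}^2$ into $\chi^2(\rho,\sigma)+1$. You propose a standalone a priori second-moment bound on $\EE\|\rho_m\|_{\sigma,-1/2}^2$, which forces you to import the step-size restriction $\tau\lesssim 1/(\Lambda^2 T)$ into the proposition and only yields the recursion in full (unconditional) expectation. The paper avoids this entirely with a purely deterministic triangle inequality, $\|\rho_m\|_{\sigma,-1/2}^2\lesssim e_{\tau,m}^2+\chi^2(\rho(m\tau),\sigma)+1\le e_{\tau,m}^2+\chi^2(\rho,\sigma)+1$ (using contractivity of the \emph{ideal} evolution for the middle term), and then absorbs the extra $\Lambda^2\tau^2\,e_{\tau,m}^2$ into the coefficient $(1+C\Lambda^2\tau^2)$. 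This removes the circularity you were worried about, needs no step-size condition at the level of the one-step recursion, and gives the stronger conditional form of \eqref{iteration-simulation}; your route would still work but is more roundabout.
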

\begin{proof}
Recalling $\bar{\cP}_{\tau} = \EE [e^{\tau \cL_a}]$ and 
using the identity:
\begin{equation*}
    \rho_{m+1} - \rho((m+1)\tau) = e^{\tau \bar{\cL}} (\rho_m - \rho(m\tau)) + (\cF_{\tau}(\cL_{m+1}) - e^{\tau \cL_{m+1}})\rho_m + (\bar{\cP}_{\tau} - e^{\tau \bar{\cL}})\rho_m + (e^{\tau \cL_{m+1}} - \bar{\cP}_{\tau})\rho_m\,, 
\end{equation*}
we decompose the error $e_{m+1} = \|\rho_{m+1} - \rho((m+1)\tau)\|_{\sigma,-1/2}$ as:
\begin{align*}
e_{m+1} \leq \underbrace{\big\|e^{\tau \bar{\cL}}(\rho_m - \rho(m\tau)) + (e^{\tau \cL_{m+1}} - \bar{\cP}_\tau) \rho_m\big\|_{\sigma, -1/2}}_{\rm I} + \underbrace{\big\|(\bar{\cP}_\tau - e^{\tau \bar{\cL}}) \rho_m + (\cF_\tau(\cL_{m+1}) - e^{\tau \cL_{m+1}}) \rho_m\big\|_{\sigma,-1/2}}_{\rm II}\,.
\end{align*}

For term ${\rm I}$, noting $\EE\left[e^{\tau \cL_a} - \bar{\cP}_\tau  \right]=0$, we have the orthogonal decomposition, by a direct computation, 
\begin{align} \label{eq:orth-decomp}
\mathbb{E}[{\rm I}^2 \mid \rho_m] =  \big\|e^{\tau \bar{\cL}}(\rho_m - \rho(m\tau))\big\|_{\sigma, - 1/2}^2  + \mathbb{E}\left[\big\|(e^{\tau \cL_{m+1}} - \bar{\cP}_\tau) \rho_m\big\|_{\sigma, - 1/2}^2 \mid \rho_m\right].
\end{align}
The first term in \eqref{eq:orth-decomp} is bounded by contractivity of $e^{\tau\bar{\cL}}$  {under $\|\cdot\|_{\sigma,- 1/2}$ in \eqref{weighcontra}}:
\begin{equation*}
\big\|e^{\tau \bar{\cL}}(\rho_m - \rho(m\tau))\big\|_{\sigma, - 1/2} \leq e_{\tau,m} := \|\rho_m - \rho(m\tau)\|_{\sigma,-1/2}\,,
\end{equation*}
since $\si$ is an invariant state of $e^{\tau \bar{\cL}}$. 
For the second term, Corollary~\ref{op-bound} yields
\begin{equation*}
\big\|(e^{\tau \cL_{m+1}} - \bar{\cP}_\tau) \rho_m\big\|_{\sigma,-1/2} = \Or(\Lambda \tau) \|\rho_m\|_{\sigma, - 1/2}\,.
\end{equation*}
Combining these two bounds gives 
\begin{equation*}
\mathbb{E}[{\rm I}^2 \mid \rho_m] \le e_{\tau,m}^2 + \Or(\Lambda^2 \tau^2) \| \rho_m\|_{\sigma, - 1/2}^2\,.
\end{equation*}

For term ${\rm II}$, by the triangle inequality, we have 
\begin{align} \label{eq:term_II_decomposition}
\begin{aligned}
&  \big\|(\bar{\cP}_\tau - e^{\tau \bar{\cL}}) \rho_m + (\cF_\tau(\cL_{m+1}) - e^{\tau \cL_{m+1}}) \rho_m \big\|_{\sigma, - 1/2}^2 \\
\lesssim\, & \big\| (\cF_\tau(\cL_{m+1}) - e^{\tau \cL_{m+1}}) \rho_m \big\|_{\sigma, - 1/2}^2 + \big\|(\bar{\cP}_\tau - e^{\tau \bar{\cL}}) \rho_m\big\|_{\sigma, - 1/2}^2 \\
\lesssim\, & \Lambda^4 \tau^4 \|\rho_m\|_{\sigma, - 1/2}^2,
\end{aligned}
\end{align}
where the last inequality follows from Assumption~\ref{as-alg1} and Corollary~\ref{op-bound}. Combining the bounds for terms ${\rm I}$ and ${\rm II}$, and applying the inequality $\|a+b\|^2 \le (1+\tau^2\Lambda^2)\|a\|^2 + \left(1+\frac{1}{\tau^2\Lambda^2}\right)\|b\|^2$, we obtain
\begin{align} \label{eq:error_recursion}
\begin{aligned}
\mathbb{E}\left[e_{m+1}^2 \mid \mathcal{E}_{\tau, m}\right] 
& \leq \left(1+\tau^2 \Lambda^2\right) \mathbb{E}\left[{\rm I}^2 \mid \rho_m\right] + \left(1+\frac{1}{\tau^2 \Lambda^2}\right) \mathbb{E}\left[{\rm II}^2 \mid \rho_m\right] \\
& \leq (1 + \Lambda^2 \tau^2)\left(e_{\tau,m}^2 + \Or(\Lambda^2 \tau^2) \|\rho_m \|_{\sigma, - 1/2}^2\right) + \Or(\Lambda^2 \tau^2)\|\rho_m \|_{\sigma, - 1/2}^2\,.
\end{aligned}   
\end{align}

Next, we bound the term $\|\rho_m \|_{\sigma, - 1/2}^2$:
\begin{align}\label{eq:norm_decomposition}
\begin{aligned}
\|\rho_m \|_{\sigma, - 1/2}^2 = \|\Gamma_{\sigma}^{-1}\rho_m \|_{\sigma,1/2}^2 
& \lesssim \left\|\Gamma_\sigma^{-1}\left(\rho_m - \rho(m\tau)\right)\right\|_{\sigma,1/2}^2 + \left\|\Gamma_\sigma^{-1}(\rho(m\tau) - \sigma)\right\|_{\sigma,1/2}^2 + \left\|\Gamma_\sigma^{-1} \sigma\right\|_{\sigma,1/2}^2 \\
& \lesssim e_{\tau,m}^2 + \chi^2(\rho(m\tau), \sigma) + 1\,.
\end{aligned}
\end{align}
Substituting \eqref{eq:norm_decomposition} into \eqref{eq:error_recursion}, we obtain the desired estimate:
\begin{equation*}
\mathbb{E}\left[ e_{m+1}^2 \mid \mathcal{E}_{\tau,m} \right] \leq \left(1 +  \Or(\Lambda^2 \tau^2)\right) e_{\tau,m}^2 + \Or(\Lambda^2 \tau^2)\left(\chi^2(\rho(m\tau), \sigma) + 1\right).
\end{equation*}
This completes the proof of the error recursion.
\end{proof}

\begin{proof}[Proof of Theorem~\ref{thm-simulation}]
Iterating the inequality \eqref{iteration-simulation} yields, for initial state $\rho$, 
\begin{align} \label{eq:final_bound}
\mathbb{E} [e_{\tau,M}(\rho)^2] \leq C \sum_{m=0}^{M-1} \left(1 + C\Lambda^2\tau^2\right)^{M - 1 - m} \Lambda^2 \tau^2\left(\chi^2(e^{m\tau \bar{\cL}}\rho,\sigma) + 1\right).
\end{align}
When $\tau \le \frac{1}{C \Lambda^2 T}$, we have
\begin{equation*}
1 + C \Lambda^2 \tau^2 = 1 + C \Lambda^2 \tau \frac{T}{M} \leq 1 + \frac{1}{M} \quad \text{and} \quad \left(1 + C\Lambda^2 \tau^2 \right)^{M- 1 -m} \leq \left(1+ \frac{1}{M}\right)^M \leq e.
\end{equation*}
Thus, the right-hand side of \eqref{eq:final_bound} is bounded by 
\begin{equation*}
\mathbb{E} [e_{\tau,M}(\rho)^2] \leq e M  (C \tau^2 \Lambda^2) \left(\chi^2(\rho,\sigma) + 1\right) \lesssim  \frac{\Lambda^2 T^2}{M}\left(1 + \chi^2(\rho,\sigma)\right),
\end{equation*}
where we have used the contractivity of $e^{m\tau \bar{\cL}}$ for $\chi^2$-divergence.
\end{proof}
\begin{proof}[Proof of \eqref{improvest}]
   {Under Assumption \ref{as2}, similarly to the above arguments, by Proposition  \ref{prop:gap}, we derive
\begin{equation} \label{prfimpest}
    \begin{aligned}
          \mathbb{E} [e_{\tau,M}(\rho)^2] & \lesssim  \frac{\Lambda^2 T^2}{M} + C \sum_{m=0}^{M-1} \left(1 + C\Lambda^2\tau^2\right)^{M - 1 - m} \Lambda^2 \tau^2 e^{- 2 \eta m \tau} \chi^2(\rho,\sigma) \\
    & \lesssim  \frac{\Lambda^2 T^2}{M} + \Lambda^2 \tau^2 \sum_{m=0}^{M-1}  e^{- 2 \eta m \tau} \chi^2(\rho,\sigma) \\
    & \lesssim  \frac{\Lambda^2 T^2}{M} +  \frac{\Lambda^2 T^2}{M^2} \frac{1 - e^{-2 \eta T}}{1 - e^{- 2 \eta \tau}}  \chi^2(\rho,\sigma)\,.        
    \end{aligned}
\end{equation}
This completes the proof. }
\end{proof}

\section{Proof of Theorem \ref{thermal-convergence} } \label{proof-thermal-convergence}
\begin{proposition}
Under Assumptions \ref{as1},\ref{as-alg1},\ref{as2}, for any state $\rho \in \cD(\cH)$, we have,  {for $\tau = \Or(\Lambda^{-1})$},
\begin{align} \label{local-average}
 \sqrt{\chi^2( \EE\cF_\tau(\cL_a)\rho,\sigma )}   \le \left(e^{-\eta \tau } + C \tau^2 \Lambda^2 \right)\sqrt{\chi^2({\rho},\sigma)} + C \tau^2 \Lambda^2\,,
\end{align}
for some constant $C > 0$.
\end{proposition}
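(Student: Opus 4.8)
The plan is to establish the one-step estimate by two applications of the triangle inequality, exploiting that in the weighted $\ell^2$ metric the square root of the chi-square divergence is literally a norm distance: for any state $\mu$, since $\mathrm{Tr}(\mu-\sigma)=0$ we have $\sqrt{\chi^2(\mu,\sigma)}=\|\Gamma_\sigma^{-1}(\mu-\sigma)\|_{\sigma,1/2}=\|\mu-\sigma\|_{\sigma,-1/2}$. Applying this with $\mu=\EE\,\cF_\tau(\cL_a)\rho$, I would write
\[
\EE\,\cF_\tau(\cL_a)\rho-\sigma=\bigl(e^{\tau\bar{\cL}}\rho-\sigma\bigr)+\bigl(\EE\,\cF_\tau(\cL_a)\rho-e^{\tau\bar{\cL}}\rho\bigr),
\]
take $\|\cdot\|_{\sigma,-1/2}$, and bound the two resulting pieces: a ``decay'' term and a ``discretization'' term.

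For the decay term I would invoke the detailed-balance hypothesis (Assumption \ref{as2}) and the spectral-gap characterization stated just after Definition \ref{Spectral gap}, which gives $\chi^2(e^{\tau\bar{\cL}}\rho,\sigma)\le e^{-2\eta\tau}\chi^2(\rho,\sigma)$, hence $\|e^{\tau\bar{\cL}}\rho-\sigma\|_{\sigma,-1/2}\le e^{-\eta\tau}\sqrt{\chi^2(\rho,\sigma)}$; this supplies the $e^{-\eta\tau}$ factor in the claim. For the discretization term I would split once more through the averaged exact evolution $\bar{\cP}_\tau=\EE\,e^{\tau\cL_a}$:
\[
\EE\,\cF_\tau(\cL_a)\rho-e^{\tau\bar{\cL}}\rho=\EE\bigl[\cF_\tau(\cL_a)\rho-e^{\tau\cL_a}\rho\bigr]+\bigl(\bar{\cP}_\tau-e^{\tau\bar{\cL}}\bigr)\rho.
\]
The first piece is $\lesssim\Lambda^2\tau^2\|\rho\|_{\sigma,-1/2}$ by Jensen's inequality (convexity of the norm) followed by the local-truncation bound of Assumption \ref{as-alg1}; the second is $\lesssim\Lambda^2\tau^2\|\rho\|_{\sigma,-1/2}$ by applying Corollary \ref{op-bound} to the superoperator norm $\|\cdot\|_{\sigma,-1/2\to\sigma,-1/2}$, whose hypothesis $\sup_{a}\|\cL_a\|_{\sigma,-1/2\to\sigma,-1/2}\lesssim\Lambda$ is precisely Lemma \ref{op-bound1} (this is where $\tau\lesssim 1/\Lambda$, implied by the step-size condition of Theorem \ref{thermal-convergence}, enters so that the Taylor remainder of Lemma \ref{Taylor} is valid). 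It then remains to convert $\|\rho\|_{\sigma,-1/2}$ to the right-hand side of the claim via $\|\rho\|_{\sigma,-1/2}\le\|\rho-\sigma\|_{\sigma,-1/2}+\|\sigma\|_{\sigma,-1/2}=\sqrt{\chi^2(\rho,\sigma)}+1$, using $\Gamma_\sigma^{-1}\sigma=I$ and $\|I\|_{\sigma,1/2}=\sqrt{\mathrm{Tr}(\sigma)}=1$. Collecting terms yields $\sqrt{\chi^2(\EE\,\cF_\tau(\cL_a)\rho,\sigma)}\le e^{-\eta\tau}\sqrt{\chi^2(\rho,\sigma)}+C\Lambda^2\tau^2\bigl(\sqrt{\chi^2(\rho,\sigma)}+1\bigr)$, which is the stated inequality after regrouping.

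I do not anticipate a genuine obstacle here; the argument is careful but essentially routine triangle-inequality bookkeeping once the norm-distance reformulation of $\sqrt{\chi^2}$ is in place. The two points that require attention are: (i) making sure Corollary \ref{op-bound} is applied to the superoperator norm induced by $\|\cdot\|_{\sigma,-1/2}$ (so the constant is controlled by $\Lambda$ through Lemma \ref{op-bound1}) rather than to $\|\cdot\|_\diamond$; and (ii) keeping the additive $O(\Lambda^2\tau^2)$ contribution — which originates from the ``$+1$'' in $\|\rho\|_{\sigma,-1/2}\le\sqrt{\chi^2(\rho,\sigma)}+1$ — cleanly separated from the multiplicative one, since exactly this split is what, upon iterating the one-step bound, produces the geometric-series residual $O(\Lambda^4\tau^2/\eta^2)$ in Theorem \ref{thermal-convergence}.
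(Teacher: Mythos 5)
Your proposal is correct and follows essentially the same route as the paper's proof: the same three-way decomposition through $e^{\tau\bar{\cL}}$ and $\bar{\cP}_\tau$, the spectral gap for the $e^{-\eta\tau}$ contraction, Corollary \ref{op-bound} (in the $(\sigma,-1/2)$-induced superoperator norm via Lemma \ref{op-bound1}) together with Assumption \ref{as-alg1} for the two $O(\Lambda^2\tau^2)$ pieces, and the bound $\|\rho\|_{\sigma,-1/2}\le 1+\sqrt{\chi^2(\rho,\sigma)}$ to close. The only cosmetic difference is that you subtract $\sigma$ explicitly where the paper carries the projector $P_I^\perp$; these are equivalent since the relevant operators are traceless.
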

\begin{proof}
Recalling $\bar{\cP}_{\tau} = \EE [e^{\tau \cL_a}]$ and using 
\begin{align*}
    \cF_\tau(\cL_a) = \cF_\tau(\cL_a) - \bar{\cP}_\tau + \bar{\cP}_\tau - e^{\tau \bar{\cL}} + e^{\tau \bar{\cL}}\,,    
\end{align*}
we split the error as
\begin{align} \label{00}
\begin{aligned}
& \sqrt{\chi^2\left(\EE[\cF_\tau(\cL_a)(\rho)], \sigma\right)} \\
= & \left\|P_I^{\perp} \Gamma_\sigma^{-1}\left( (\EE\cF_\tau(\cL_a) - \bar{\cP}_\tau) + (\bar{\cP}_\tau - e^{\tau \bar{\cL}} ) + e^{\tau \bar{\cL}}   \right) {\rho}\right\|_{\sigma, 1 / 2} \\
\leq & \sqrt{\chi^2(e^{\tau \bar{\cL}}\rho, \sigma) } + \big\|\big(\bar{\cP}_\tau-e^{\tau \bar{\mathcal{L}}}\big) \rho\big\|_{\sigma, -1 / 2}  +\big\|\big(\bar{\cP}_\tau- \EE \cF_{\tau}(\cL_a) \big) {\rho}\big\|_{\sigma, -1 / 2}\,. \\
\end{aligned}
\end{align}
One can bound the second and third terms in \eqref{00} by Corollary \ref{op-bound} and Assumption \ref{as-alg1}: 
\begin{align*}
\big\|\big(\bar{\cP}_\tau-e^{\tau \bar{\mathcal{L}}}\big) \rho\big\|_{\sigma, -1 / 2}  +\big\|\big(\bar{\cP}_\tau- \EE \cF_{\tau}(\cL_a) \big) {\rho}\big\|_{\sigma, -1 / 2}\lesssim \Lambda^2 \tau^2 \|\rho\|_{\sigma,-1/2}\,,
\end{align*}
where $\|\rho\|_{\sigma,-1/2}^2 = 1+ \chi^2(\rho,\sigma)$. 
To bound the first term of \eqref{00}, by the spectral gap condition in Assumption \ref{as2} and Proposition  \ref{prop:gap}, we have
$$ \sqrt{\chi^2(e^{\tau \bar{\cL}}\rho, \sigma) }  \le e^{-\tau \eta} \sqrt{\chi^2(\rho, \sigma) }\,.$$
Combining the above estimates, we complete the proof.
\end{proof}

\begin{proof}[Proof of  \eqref{convergence-average-sampler}]
     Iterating 
\eqref{local-average} implies 
\begin{align*}
\sqrt{\chi^2\left({\bar{\mathcal{E}}}_{\tau, M}(\rho), \sigma\right)} & \le \left(e^{-\eta \tau } + C \tau^2 \Lambda^2 \right)^M\sqrt{\chi^2({\rho},\sigma)} + 
\sum_{k = 0}^{M-1} \left(e^{-\eta \tau } + C \tau^2 \Lambda^2 \right)^{k}
C \tau^2 \Lambda^2 \\
& \le e^{-\eta \tau M} \left(1 + C e^{\eta \tau} \tau^2 \Lambda^2 \right)^M\sqrt{\chi^2({\rho},\sigma)} + 
\sum_{k = 0}^{M-1} e^{-\eta \tau k} \left(1 + C e^{\eta \tau} \tau^2 \Lambda^2 \right)^k
C \tau^2 \Lambda^2 
\end{align*}
 {When $\tau \le \frac{1}{\eta}$ and $\tau \le \frac{1}{C e T \Lambda^2}$, we have 
\begin{align*}
    1 + C e^{\eta \tau} \tau^2 \Lambda^2 \le 1 + \frac{1}{M}\,, \quad  \left(1 + C e^{\eta \tau} \tau^2 \Lambda^2 \right)^k \le \big(1 + \frac{1}{M}\big)^M \le e\,,\ \forall k \le M\,.
\end{align*}
It follows that 
\begin{equation} \label{auxeq1}
    \begin{aligned}
        \sqrt{\chi^2\left({\bar{\mathcal{E}}}_{\tau, M}(\rho), \sigma\right)} &\lesssim e^{-\eta \tau M} \sqrt{\chi^2({\rho},\sigma)} + 
\sum_{k = 0}^{M-1} e^{-\eta \tau k} 
 \tau^2 \Lambda^2 \\
 &\lesssim e^{-\eta \tau M} \sqrt{\chi^2({\rho},\sigma)} + 
\frac{1}{1 - e^{-\eta \tau}} 
 \tau^2 \Lambda^2 \\
  &\lesssim e^{-\eta \tau M} \sqrt{\chi^2({\rho},\sigma)} + 
\frac{\tau \Lambda^2}{\eta}\,.
    \end{aligned}
\end{equation}
This completes the proof.}
\end{proof}

\begin{proposition} \label{prop:local-random}
Under Assumptions \ref{as1},\ref{as-alg1},\ref{as2}, for any state $\rho \in \cD(\cH)$, we have,  {for $\tau = \Or(\Lambda^{-1})$},
\begin{align} \label{local-random}
 \EE \left[ \chi^2(\cF_\tau(\cL_a)\rho,\sigma )  \right] \le \left(e^{-2\eta \tau } +C \tau^2 \Lambda^2 \right)\chi^2({\rho},\sigma) + C \tau^2 \Lambda^2\,.
\end{align}
for some constant $C > 0$.
\end{proposition}
\begin{proof}
Similarly to \eqref{000}, using 
\begin{align*}
    \cF_\tau(\cL_a) = e^{\tau \bar{\cL}} +  (\bar{\cP}_\tau - e^{\tau \bar{\cL}} ) + (\cF_\tau(\cL_a) -e^{\tau \cL_a}) + (e^{\tau \cL_a} - \bar{\cP}_\tau)\,,    
\end{align*}
for a fixed state $\rho$, we have
\begin{align*} \chi^2(\cF_\tau(\cL_a)\rho,\sigma)  = \Big\|\big(e^{\tau \bar{\cL}} +  (\bar{\cP}_\tau - e^{\tau \bar{\cL}} ) + (\cF_\tau(\cL_a) -e^{\tau \cL_a}) + (e^{\tau \cL_a} - \bar{\cP}_\tau)\big)\rho - \sigma\Big\|_{\sigma, - 1/2}^2\,.
\end{align*}
Thanks to $\EE[e^{\tau \cL_a} - \bar{\cP}_\tau]=0$, the following orthogonal decomposition holds  
\begin{align} \label{1111}
& \EE \chi^2(\cF_\tau(\cL_a)\rho,\sigma) \notag  \\  = & \EE \underbrace{\left\|(e^{\tau \cL_a} - \bar{\cP}_\tau)\rho \right\|_{\sigma, - 1/2}^2}_{{\rm I}} + \EE \underbrace{\big\|  ( e^{\tau \bar{\cL}} +  (\bar{\cP}_\tau - e^{\tau \bar{\cL}} ) + (\cF_\tau(\cL_a) -e^{\tau \cL_a}))\rho - \si \big\|_{\sigma, - 1/2}^2}_{{\rm II}}\,.
\end{align}
For the term ${\rm I}$, by using the triangle inequality and Corollary \ref{op-bound}, we find 
\begin{align} \label{2222}
\begin{aligned}
  &  \left\|(e^{\tau \cL_a} - \bar{\cP}_\tau) \rho\right\|_{\sigma, - 1/2}^2  \le 2 \left\|(e^{\tau \cL_a} - \bar{\cP}_\tau) (\rho-\sigma) \right\|_{\sigma, - 1/2}^2 +2 \left\|(e^{\tau \cL_a} - \bar{\cP}_\tau) \sigma \right\|_{\sigma, - 1/2}^2  \\  
\lesssim &  \tau^2 \Lambda^2 \left(\left\|\rho-\sigma \right\|_{\sigma, - 1/2}^2 + 1 \right) \\
\lesssim &  \tau^2 \Lambda^2 \left( \chi^2(\rho,\sigma) + 1 \right). 
\end{aligned}
\end{align}
Next, we bound the term ${\rm II}$ in \eqref{1111}. We use $\|a+b\|^2 \le (1+\tau^2\Lambda^2)\|a\|^2 + \left(1+\frac{1}{\tau^2\Lambda^2}\right) \|b\|^2$ to split the error as 
\begin{align}\label{3333}
{ \rm II} \le (1+\tau^2\Lambda^2)\big\|e^{\tau \bar{\cL}}(\rho) - \si \big\|_{\sigma, - 1/2}^2 + \left(1+\frac{1}{\tau^2 \Lambda^2}\right) \big\|(\bar{\cP}_\tau - e^{\tau \bar{\cL}})\rho + (\cF_{\tau}(\cL_a) - e^{\tau \cL_a})\rho \big\|_{\sigma, - 1/2}^2,
\end{align}
Using the spectral gap condition in Assumption \ref{as2}, the first term in \eqref{3333} decays as 
\begin{align} \label{4444}
 {\big\|e^{\tau \bar{\cL}}(\rho) - \si \big\|_{\sigma, - 1/2}^2 = \chi^2(e^{\tau \bar{\cL}}(\rho), \sigma) \le e^{-2\eta \tau} \chi^2(\rho,\sigma)\,.}
\end{align}
Moreover, the second term in \eqref{3333} is of order $\Or(\tau^2)$ by combining Corollary \ref{op-bound} and Assumption \ref{as-alg1}: 
\begin{align} \label{5555}
\begin{aligned}
& \big\|(\bar{\cP}_\tau - e^{\tau \bar{\cL}})\rho + (\cF_{\tau}(\cL_a) - e^{\tau \cL_a})\rho  \big\|_{\sigma,- 1/2}^2 \\
\lesssim\,  & \big\|(\bar{\cP}_\tau - e^{\tau \bar{\cL}})(\rho - \si) + (\cF_{\tau}(\cL_a) - e^{\tau \cL_a})(\rho - \si)  \big\|_{\sigma,- 1/2}^2 + \big\|(\bar{\cP}_\tau - e^{\tau \bar{\cL}})\si + (\cF_{\tau}(\cL_a) - e^{\tau \cL_a})\si  \big\|_{\sigma,- 1/2}^2 \\
\lesssim \, &  {\Lambda^4 \tau^4 \left(\chi^2(\rho,\sigma) + 1 \right)}.
\end{aligned}
\end{align}
Combining \eqref{2222}--\eqref{5555} together, we obtain
\begin{align*}
\EE\left[ \chi^2(\cF_{\tau}(\cL_a)\rho,\sigma)\right] \le  \left(e^{-2\eta \tau} + C \Lambda^2 \tau^2\right)\chi^2(\rho,\sigma) + C \Lambda^2 \tau^2\,.
\end{align*}
We complete the proof of \eqref{local-random}.
\end{proof}

\begin{proof}[Proof of  \eqref{convergence-sampler}]
Similarly, iterating \eqref{local-random}, we find
\begin{align*}
    \EE\big[ {\chi^2(\cE_{\tau,M}(\rho),\sigma)} \big]  \le \left(e^{-2\eta \tau} + C \Lambda^2 \tau^2\right)^M\chi^2(\rho,\sigma) + \sum_{k = 0}^{M-1} \left(e^{-2\eta \tau} + C \Lambda^2 \tau^2\right)^k C \Lambda^2 \tau^2\,.
\end{align*}
 {When $\tau = \Or\big(\min\{\frac{1}{\eta}, \frac{1}{\Lambda}, \frac{1}{T \Lambda^2}\}\big)$, it follows that 
\begin{align*}
    \EE\big[ {\chi^2(\cE_{\tau,M}(\rho),\sigma)} \big]  \lesssim \e^{-2\eta \tau M} \chi^2(\rho,\sigma) + \frac{\tau \Lambda^2}{\eta}\,,
\end{align*}
by analogous calculations to those in \eqref{auxeq1}.}
\end{proof}

\section{Proof of Lemma \ref{lemma-random-design}}\label{app:random}
\begin{proof}
For fixed indices $i,j,k,l$, noting that $A_{ij}\overline{A_{lk}} = \mathrm{Tr}(U_2^{\otimes 2} D^{\otimes 2}U_2^{\dag \otimes 2} (\ket{\psi_j}\bra{\psi_i}\otimes \ket{\psi_l}\bra{\psi_k})  )$, we have
\begin{align*}
\EE \left[A_{ij} \overline{A_{lk}} \right] & = \EE \left[\mathrm{Tr}(U_2^{\otimes 2} D^{\otimes 2}U_2^{\dag \otimes 2} (\ket{\psi_j}\bra{\psi_i}\otimes \ket{\psi_l}\bra{\psi_k}))\right] \\ &  =  \EE_{U \sim \mathrm{Haar}} \left[\mathrm{Tr}(U^{\otimes 2} D^{\otimes 2}U^{\dag \otimes 2} (\ket{\psi_j}\bra{\psi_i}\otimes \ket{\psi_l}\bra{\psi_k}))\right] \\ & = \EE_{{U} \sim \mathrm{Haar}} \left[\bra{\psi_i} UDU^\dag  \ket{\psi_j}\bra{\psi_k}U DU^\dag  \ket{\psi_l}\right]\,.
\end{align*}
Let $(\ket{\phi_j})_{1\le j \le N}$ denote the computational basis. Then, $D$ is diagonal with random $\pm 1$ entries under this basis. Note that ${U}\ket{\psi_j}$ is identical to ${U}\ket{\phi_j}$ in distribution. Denote ${U}^{ij} =  \bra{\phi_i} {U}\ket{\phi_j} $ and  $r_j \sim \mathrm{Unif}\{+1,-1 \}$ with $\EE r_ir_j = \delta_{ij}$. It follows that 
\begin{align*}
& \EE_{{U} \sim \mathrm{Haar}} \left[\bra{\psi_i} U DU^\dag  \ket{\psi_j}\bra{\psi_k}U DU^\dag\ket{\psi_l}\right]
= \EE \left[\sum_{s=1}^N r_s {U}^{is}\overline{U^{js}} \sum_{s=1}^N r_s {U}^{ks}\overline{U^{ls}} \right] \\
 = & \EE \left[\sum_{s=1}^N  {{U}}^{is}\overline{U^{js}} {U}^{ks}\overline{U^{ls}}  \right] \\
 = & \begin{dcases}  0,  & \quad (i,j ) \ne ( l, k)\,, \\ \sum_{s=1}^N  \EE \left[|{U}^{is}|^2 |{U}^{js}|^2\right] =  \Theta(1/N), & \quad   (i,j ) =( l, k)\,.  
    \end{dcases}
\end{align*}
The proof is complete. 
\end{proof}

\section{Proof of Theorem \ref{thm-construction-Davies}} \label{proof-thm-construction-Davies}
\begin{proof}
We expand $K_a$ in \eqref{def:kintegral} with respect to the energy basis $\ket{\psi_i}$: 
\begin{equation}\label{expka}
    K_a = \sum_{i=1}^N \sum_{j=1}^N \sqrt{\gamma(\lambda_i - \lambda_j) } \ket{\psi_i}\bra{\psi_i}A_a\ket{\psi_j}\bra{\psi_j} = \sum_{\omega \in B_H} \sqrt{\gamma(\omega)} A_a(\omega)\,,    
\end{equation}
where $A_a(\omega)$ is defined in \eqref{def:aomega}. Thus, we have   
\begin{align} \label{explka}
    \cL_{K_a}(\rho) = \sum_{\omega,\omega' \in B_H} \sqrt{\gamma(\omega) \gamma(\omega')}\left( A_a(\omega) \rho A_a(\omega')^\dag - \frac{1}{2}\{A_a(\omega')^\dag A_a(\omega), \rho \}\right).
\end{align}
For any $1\le i,j\le N$, there holds, for any operator $A$, 
\begin{align*}
& A(\omega)\ket{\psi_i} \bra{\psi_j} A(\omega')^\dag = \sum_{\lambda_k-\lambda_i = \omega} \sum_{\lambda_{k'} - \lambda_j = \omega'} A_{ki}\overline{A_{k'j}}\ket{\psi_k}\bra{\psi_{k'}}\,, \\
& A(\omega')^\dag A(\omega)\ket{\psi_i}\bra{\psi_j} = \sum_{\lambda_{k'} - \lambda_k =  - \omega' } \sum_{\lambda_k - \lambda_i = \omega} \overline{A_{kk'}}A_{ki}\ket{\psi_{k'}}\bra{\psi_j}\,, \\
& \ket{\psi_i}\bra{\psi_j}A(\omega')^\dag A(\omega) = \sum_{\lambda_{k'} - \lambda_{k} = -\omega}\sum_{\lambda_k - \lambda_j = \omega'} \overline{A_{kj}}A_{kk'} \ket{\psi_i}\bra{\psi_{k'}}\,.
\end{align*}
When $\omega \ne \omega'$, by Assumption \ref{random-operator}, we have $\EE \left[A_{ki}\overline{A_{k'j}}\right] = \EE \left[\overline{A_{kk'}}A_{ki}\right] = \EE \left[\overline{A_{kj}}A_{kk'}\right] = 0$ for $A$ from the ensemble $\{A_a\}_{a \in \mc{A}}$, thus the expected summation over $\omega \ne \omega'$ in \eqref{explka}
vanishes. Therefore, we can conclude that 
 \begin{align*}
     \EE [\cL_{K_a}(\rho)] = \sum_{\omega \in B_H} \gamma(\omega) \EE_a\left( A_a(\omega) \rho A_a(\omega)^\dag - \frac{1}{2} \left\{A_a(\omega)^\dag A_a(\omega), \rho \right\}\right)
 \end{align*}
 is a Davies generator. 
\end{proof}

\section{Proof of Theorem \ref{thm-Davies}} \label{proof-thm-Davies}
\begin{proof}
We omit the subscript $a$ for $A_a$ and $K_a$ for simplicity. We first compute the expectation of $K \ket{\psi_i}\bra{\psi_j} K^\dag - \frac{1}{2}\{K^\dag K, \ket{\psi_i}\bra{\psi_j} \}$ for $i,j \in [N]$. Note that, by \eqref{expka},  
 $$
K\left|\psi_i\right\rangle=\sum_{k=1}^{N} \sqrt{\gamma(\lambda_k - \lambda_i) }A_{k i}\left|\psi_k\right\rangle, \quad K^{\dagger}\left|\psi_i\right\rangle=\sum_{k=1}^{N} \sqrt{\gamma(\lambda_i - \lambda_k)} \overline{A_{ik}}\left|\psi_k\right\rangle\,. $$
We have, according to Assumption \ref{random-operator}, 
\begin{align*}
\EE \left[K \ket{\psi_i}\bra{\psi_j}K^\dag\right] & = \sum_{k=1}^N \sum_{k'=1}^N  \sqrt{\gamma(\lambda_k - \lambda_i) \gamma(\lambda_{k'} - \lambda_j)} \EE \left[A_{k i}\overline{A_{k'j}}\right] \ket{\psi_k}\bra{\psi_{k'}} 
\\ & = \begin{cases} 
\xi^2 \sum_{k=1}^N \gamma(\lambda_k - \lambda_i) \ket{\psi_k}\bra{\psi_k},\quad & i=j\,,   \\
 0\,, \quad  & i \ne j\,.
\end{cases}
\end{align*}
and 
\begin{align*}
 \begin{aligned}
\EE \left[K^\dag K  \ket{\psi_i}\bra{\psi_j}\right] 
& = \sum_{k', k=1}^{N}\sqrt{\gamma(\lambda_k - \lambda_i
) \gamma(\lambda_k - \lambda_{k'}) } \EE \left[\overline{A_{k k'}} A_{k i} \right]\left|\psi_{k'}\right\rangle\left\langle\psi_j\right| \\
& = \xi^2 \left(\sum_{k=1}^{N} \gamma(\lambda_k - \lambda_{i})\right)\left|\psi_i\right\rangle\left\langle\psi_j\right|.
\end{aligned}
\end{align*}
Similarly, there holds
$$
\EE \left[\ket{\psi_i}\bra{\psi_j} K^\dag K \right] = \xi^2 \left(\sum_{k=1}^{N} \gamma(\lambda_k - \lambda_{j})\right)\left|\psi_i\right\rangle\left\langle\psi_j\right|\,.
$$

Combining the above computations, we conclude  
\begin{align*}
\EE \left[\cL_K \ket{\psi_i}\bra{\psi_i}\right] = \xi^2 \sum_{k \ne i} \gamma(\lambda_k - \lambda_i) \ket{\psi_k}\bra{\psi_k}  - \gamma(\lambda_k - \lambda_i) \ket{\psi_i}\bra{\psi_i}\,,
\end{align*}
and for $i \ne j$,
\begin{align*}
\EE \left[\cL_K \ket{\psi_i}\bra{\psi_j}\right] = -\frac{1}{2} \sum_k \xi^2\left(\gamma(\lambda_k - \lambda_i) + \gamma(\lambda_k - \lambda_j) \right) \ket{\psi_i}\bra{\psi_j}\,.
\end{align*}
We complete the proof of 
\eqref{pauli} and \eqref{coherencef}.

\end{proof}

\section{Proof of Theorem \ref{fast-mixing}} \label{proof-fast-mixing}

We begin by presenting the conductance-based technique for the mixing time analysis of classical random walks, used in the proof of Theorem \ref{fast-mixing}.
  
\begin{lemma} [{\cite[Proposition 1]{10.1214/aoap/1177005980}}] \label{spectral-bound}  
  Consider an irreducible, reversible Markov chain on a finite state space $X$, described by a transition matrix $P = (p_{ij})_{i,j\in X}$. Suppose that $p_{ij} > 0$ for all $i \neq j$, and denote by $\pi = (\pi_i)_{i\in X}$ the unique stationary distribution. Let $L=P-I$ be the generator for the associated continuous-time Markov semigroup. Then, the spectral gap of $L$ is bounded below by $\min_{i,j\in X,\, j\neq i} \frac{p_{ij}}{\pi_j}$. 
\end{lemma}

\begin{theorem} \label{conductance}
Consider a state space $X = \{1,\cdots,N \}$ equipped with the Gibbs distribution $\pi_i = \frac{1}{Z(\beta)}e^{-\beta \lambda_i}$, where $0 < \lambda_1 \le \lambda_2 \le \cdots \le \lambda_N$ are energy levels. For the Metropolis--Hastings transition probabilities:
    \[
    p_{ij} = \frac{1}{2N}\min\big\{1, e^{-\beta(\lambda_j-\lambda_i)}\big\}, \quad i\neq j, \quad 
    p_{ii} = 1 - \sum_{j\neq i} p_{ij},
    \]
    the spectral gap of $L = P - I$ is bounded below by
    \[
   \frac{1}{2N}\sum_{i=1}^N e^{-\beta(\lambda_i - \lambda_1)}\,.
    \]
\end{theorem}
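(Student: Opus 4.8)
The plan is to obtain the bound as a direct corollary of Proposition~\ref{spectral-bound} applied to the Metropolis--Hastings kernel $P=(p_{ij})$: up to the normalization of $P$, the quantity $\frac1N\sum_i e^{-\beta(\lambda_i-\lambda_1)}$ is precisely the Ingrassia-type estimate $\min_{i,j}p_{ij}/\pi_j$ for this chain. So I would (i) check the hypotheses of Proposition~\ref{spectral-bound}, (ii) evaluate $\min_{i,j}p_{ij}/\pi_j$ in closed form, and (iii) observe that the minimizing pair is the one containing the ground state $\lambda_1$.

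First I would verify that $P$ is reversible with respect to the Gibbs distribution $\pi_i=e^{-\beta\lambda_i}/Z(\beta)$: for $i\neq j$,
\[
\pi_i p_{ij}=\frac{1}{2NZ(\beta)}\,e^{-\beta\lambda_i}\min\bigl(1,e^{-\beta(\lambda_j-\lambda_i)}\bigr)=\frac{1}{2NZ(\beta)}\min\bigl(e^{-\beta\lambda_i},e^{-\beta\lambda_j}\bigr),
\]
which is symmetric in $(i,j)$, whence $\pi_i p_{ij}=\pi_j p_{ji}$. Since the energies are finite, every off-diagonal entry $p_{ij}=\frac1{2N}\min(1,e^{-\beta(\lambda_j-\lambda_i)})$ is strictly positive, so the chain is fully connected and irreducible; thus Proposition~\ref{spectral-bound} gives that the spectral gap is at least $\min_{i,j}p_{ij}/\pi_j$.

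Next I would compute that minimum. Using $1/\pi_j=Z(\beta)e^{\beta\lambda_j}$ and the elementary identity $\min(1,e^{-\beta(\lambda_j-\lambda_i)})\,e^{\beta\lambda_j}=e^{\beta\min(\lambda_i,\lambda_j)}$, one gets, for $i\neq j$,
\[
\frac{p_{ij}}{\pi_j}=\frac{Z(\beta)}{2N}\,e^{\beta\min(\lambda_i,\lambda_j)}.
\]
For the diagonal terms one has $p_{ii}=1-\sum_{k\neq i}p_{ik}>\frac12$ (because $\sum_{k\neq i}p_{ik}\le\frac{N-1}{2N}<\frac12$), while $\pi_i\le1$, so $p_{ii}/\pi_i>\frac12\ge\frac1{2N}\sum_k e^{-\beta(\lambda_k-\lambda_1)}$; hence the diagonal never attains the minimum. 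Since $\beta>0$ and $\min(\lambda_i,\lambda_j)\ge\lambda_1$ for every pair, with equality whenever $1\in\{i,j\}$, I conclude
\[
\min_{i,j}\frac{p_{ij}}{\pi_j}=\frac{Z(\beta)}{2N}e^{\beta\lambda_1}=\frac{1}{2N}\sum_{i=1}^N e^{-\beta(\lambda_i-\lambda_1)},
\]
which is the asserted lower bound (the factor $\frac12$ being immaterial, i.e.\ absorbed into the $\frac1{2N}$ normalization of $p_{ij}$).

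I do not anticipate a real difficulty here: the whole argument is an invocation of Proposition~\ref{spectral-bound} together with one algebraic identity, and the only points needing care are confirming reversibility (so that the proposition genuinely applies) and discarding the diagonal entries from the minimum, both handled above. Should one wish to avoid citing Proposition~\ref{spectral-bound}, the fallback would be to reprove it from the variational characterization of the gap via $\mathrm{Var}_\pi(f)\le\frac12\sum_{i,j}\pi_i\pi_j(f_i-f_j)^2$ together with the pointwise bound $p_{ij}\ge\pi_j\min_{k,l}(p_{kl}/\pi_l)$, but invoking it directly is cleaner.
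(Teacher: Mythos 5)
Your proof is correct and follows essentially the same route as the paper: both invoke Proposition~\ref{spectral-bound} and reduce to bounding $\min_{i,j} p_{ij}/\pi_j$, arriving at $\frac{1}{2N}\sum_{i=1}^N e^{-\beta(\lambda_i-\lambda_1)}$ (your explicit check of reversibility and your treatment of the diagonal entries are in fact more careful than the paper's). Note that both your argument and the paper's yield the constant $\frac{1}{2N}$ rather than the $\frac{1}{N}$ stated in the theorem --- a harmless factor-of-two discrepancy already present in the original proof.
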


\begin{proof}
We apply Lemma \ref{spectral-bound}, and find that the spectral gap $\lad(L)$ is bounded by $$\lad(L) \ge  \min_{i\neq j} \frac{p_{ij}}{\pi_{j}}\,.$$ Note that for any $i \neq j$, there holds 
\begin{align*}
     \frac{p_{ij}}{\pi_{j}} \ge \frac{\frac{1}{2N}e^{-\beta(\lambda_j - \lambda_1)}}{\pi_j}  = \frac{Z(\beta)}{2Ne^{-\beta\lambda_1}}\,.
\end{align*}
Thus, we complete the proof by $Z(\beta) = \sum_{i=1}^N e^{-\beta \lambda_i}$. 
\end{proof}

\begin{corollary}\label{diag-decay}
Denote by $\alpha := \frac{1}{N}\sum_{i=1}^N e^{-\beta(\lambda_i-\lambda_1)}$, and let $\mathrm{diag}(\rho)  = \sum_{i=1}^N  \bra{\psi_i}\rho\ket{\psi_i}\ket{\psi_i}\bra{\psi_i}$ be the diagonal part of the state $\rho$ under the basis $(\ket{\psi_i})_{1\le i \le n}$. The dynamics \eqref{lindblad} satisfies 
\begin{equation*}
    \chi^2(\mathrm{diag}(\rho_t), \si) \le e^{-\alpha t}  \chi^2(\mathrm{diag}(\rho_0),\si)\,.
\end{equation*}
\end{corollary}
\begin{proof}
Recalling \eqref{pauli} in Theorem \ref{thm-Davies}, $\mathrm{diag}(\rho_t)$ evolves according to the classical Metropolis-Hastings dynamics. Its convergence is then guaranteed by Theorem \ref{conductance}.
\end{proof}

\begin{proof}[Proof of Theorem \ref{fast-mixing}]
We only need to establish the exponential decay of the $\chi^2$ divergence, $\chi^2(\rho_t, \sigma)$. Observe that  
\begin{align*}
    \chi^2(\rho_t, \sigma) = \left\|\Gamma_{\sigma}^{-1}(\rho_t - \sigma)\right\|_{\sigma,1/2}^2 & = \left\|\Gamma_{\sigma}^{-1}(\mathrm{diag}(\rho_t) - \sigma)\right\|_{\sigma,1/2}^2 + \sum_{1 \leq i \neq j \leq N} \bra{\psi_i} \Gamma_{\sigma}^{-1} \rho_t \ket{\psi_j}^2 \\    
    & = \left\|\Gamma_{\sigma}^{-1}(\mathrm{diag}(\rho_t) - \sigma)\right\|_{\sigma,1/2}^2 + \sum_{1 \leq i \neq j \leq N} \frac{\bra{\psi_i} \rho_t \ket{\psi_j}^2}{\bra{\psi_i} \sigma \ket{\psi_i} \bra{\psi_j} \sigma \ket{\psi_j}}\,.
\end{align*}
By Corollary \ref{diag-decay}, the diagonal terms exhibit exponential decay:  
\begin{align*}
    \left\|\Gamma_{\sigma}^{-1}(\mathrm{diag}(\rho_t) - \sigma)\right\|_{\sigma,1/2}^2 \leq e^{-\alpha t} \left\|\Gamma_{\sigma}^{-1}(\mathrm{diag}(\rho_0) - \sigma)\right\|_{\sigma,1/2}^2\,,
\end{align*}
where the exponent $\alpha$ is given by  
\begin{align*}
    \alpha = \frac{1}{N} \sum_{i=1}^N e^{-\beta(\lambda_i - \lambda_1)}\,.
\end{align*}
For the off-diagonal terms, \eqref{coherencef} from Theorem \ref{thm-Davies} ensures that for any $i, j \in [N]$,  
\begin{align*}
\bra{\psi_i} \rho_t \ket{\psi_j} \leq e^{-\alpha t} \bra{\psi_i} \rho_0 \ket{\psi_j},    
\end{align*}
where we used the inequality  
\begin{align*}
    \sum_k \frac{1}{N} \left( \gamma(\lambda_k - \lambda_i) + \gamma(\lambda_k - \lambda_j) \right) \geq \alpha.
\end{align*}
Combining these results, we conclude that  
\begin{align*}
    \chi^2(\rho_t, \sigma) \leq e^{-\alpha t} \chi^2(\rho_0, \sigma).
\end{align*}
Under the condition \eqref{low-energy}, the exponent $\alpha$ is bounded below as $\alpha = \Omega(r_\beta(H))$. This completes the proof.
\end{proof}
\bibliographystyle{quantum}
\bibliography{apssamp}

\end{document}